\begin{document}

\title{Prediction-Based Control Barrier Functions for Input-Constrained Safety Critical Systems}
\author{Ali Mesbah, Seid H. Pourtakdoust, Alireza Sharifi, and Afshin Banazadeh
\thanks{Ali Mesbah is with the Department of Aerospace Engineering at the Sharif University of Technology, Tehran, TE, Iran (e-mail: ali.mesbah235@sharif.edu). }
\thanks{Seid H. Pourtakdoust is with the Department of Aerospace Engineering at the Sharif University of Technology, Tehran, TE, Iran (e-mail: pourtak@sharif.edu).}
\thanks{Alireza Sharifi is with the Department of Aerospace Engineering at the Sharif University of Technology, Tehran, TE, Iran (e-mail: ar.sharifi@sharif.edu).}
\thanks{Afshin Banazadeh is with the Department of Aerospace Engineering at the Sharif University of Technology, Tehran, TE, Iran (e-mail: banazadeh@sharif.edu).}
}

\markboth{}
{Mesbah \MakeLowercase{\textit{et al.}}: Prediction-Based Control Barrier Functions}

\newtheorem{theorem}{Theorem}
\newtheorem{proposition}{Proposition}
\newtheorem{definition}{Definition}

\maketitle

\begin{abstract}
Control barrier functions (CBFs) have emerged as a popular topic in safety critical control due to their ability to provide formal safety guarantees for dynamical systems. Despite their powerful capabilities, the determination of feasible CBFs for input-constrained systems is still a formidable task and a challenging research issue. The present work aims to tackle this problem by focusing on an alternative approach towards a generalization of some ideas introduced in the existing CBF literature. The approach provides a rigorous yet straightforward method to define and implement prediction-based control barrier functions for complex dynamical systems to ensure safety with bounded inputs. This is accomplished by introducing a prediction-based term into the CBF that allows for the required margin needed to null the CBF rate of change given the specified input constraints. Having established the theoretical groundwork, certain remarks are subsequently presented with regards to the scheme’s implementation. Finally, the proposed prediction-based control barrier function (PB-CBF) scheme is implemented for two numerical examples. In particular, the second example is related to aircraft stall prevention, which is meant to demonstrate the functionality and capability of the PB-CBFs in handling complex nonlinear dynamical systems via simulations. In both examples, the performance of the PB-CBF is compared with that of a non-prediction based basic CBF.
\end{abstract}

\begin{IEEEkeywords}
Safety Critical Control, Control Barrier Functions, Nonlinear Dynamics, Stall Prevention.
\end{IEEEkeywords}

\IEEEpeerreviewmaketitle

\section{Introduction} \label{SecI}
\label{sec:introduction}
\IEEEPARstart{S}{afety} has always been of primary importance and a desired property of any control system. In this sense, remaining within the safe margins is just as important as completing the control task. In other words, a theoretically stable system that results in unsafe behavior is not viable. To this aim, control engineers have often tried to ensure the safety of their design through indirect methods, such as designing for more robustness and stability, so that the control systems can never inadvertently veer off into unsafe states. However, such ad hoc approaches may often be either too conservative or lacking any actual formal guarantees for safety. Further, the concern with assured safety has greatly increased with the rise of artificial intelligence-based control systems that utilize neural networks that practically act as black boxes. Consequently, the subject of control system safety has become a prominent area of active, cutting-edge research within the past decade.

As a recent field in control theory, safety-critical control refers to a series of techniques that aim to generate safe control systems via the development and utilization of rigorous formal methods. In recent years, control barrier functions have grown as one of the topics most synonymous with safety-critical control. Control barrier functions that are very similar in functionality to control Lyapunov functions can formally guarantee safety through invariance inside a safe set.

Similar to Lyapunov stability, the theoretical groundwork for control barrier functions was laid in the twentieth century with one of the earliest commonly cited sources being Mitio Nagumo’s work on set invariance \cite{O1} in the 1940s. The conditions for invariance were independently rediscovered and stated in slightly different forms a number of times, with one of the most notable examples being the work of Jean-Michel Bony \cite{O2} and Haïm Brezis in the 1960s and 1970s, which is known as the Bony-Brazis theorem as yet another popular theorem in set invariance-related work. After the establishment of the theory by such mathematicians, its applications began to manifest in engineering-related control works in the form of “barrier certificates” as a tool to formally guarantee the safe behavior of an existing control system.  The term “control barrier function” was popularized in safety-critical control after being proposed in 2014 by Ames et al., stating that the set invariance conditions could be used in quadratic programming to directly synthesize safe control signals \cite{O3}\cite{O4} resulting in the meteoric rise of safety-critical control in the subsequent years. A more thorough overview of the control barrier functions’ history can be found in \cite{O5}; as such, the remainder of this section is dedicated to the review of the recent relevant literature and its utility.

There are some studies on the notion of adaptiveness for control barrier functions (CBFs) concerning uncertain dynamical systems \cite{O6}\cite{O7}. The application of reinforcement learning to learn and compensate for model uncertainties in the CBF constraints has also been suggested \cite{N8}. In another research, CBFs are extended to “high-order CBFs” in which the Nagumo conditions have to hold consecutively up to a certain high-order derivative resulting in better robustness characteristics \cite{O8}\cite{O9}. In a similar vein, \cite{O10} tackles high-order systems by extending the well-known Lyapunov backstepping method to CBFs by subtracting (instead of adding) the quadratic backstepping term from the initial CBF. Other work has developed the notion of input-to-state safety for the robustness of safety-critical systems subjected to bounded disturbances \cite{N12}. The theory of CBFs has been extended to discrete dynamical systems as well \cite{N13}.

One prominent subject of interest that has concerned a large portion of CBF-related research is the issue of feasibility in the presence of input constraints. In the particular automotive example given by \cite{O3}, the vehicle’s braking distance had to be incorporated into the CBF to ensure that it could be feasibly implemented with acceleration limits. An approach by \cite{O12}, also suggests a framework to obtain feasible solutions based on input-constrained CBFs, which are an extension of high-order barrier functions. As another scheme, \cite{O13} proposes the unification of CBFs with Hamilton-Jacobi reachability to allow for guaranteed bounded disturbance rejection. An approach suggested by \cite{O11} composes feasible CBFs using a Boolean formulation of state constraints. A similar direction is taken by other researchers to construct approximations of CBFs via neural networks. This direction, at the cost of losing some of the hard safety guarantees, allows not only higher scalability with multiple agents involved \cite{O14} but also for the complexity of finding admissible CBFs for input-constrained high-order systems \cite{N18}\cite{O15} to be considerably reduced. The work in \cite{N20} and \cite{N21} propose an optimization-based approach for learning CBFs based on sampled data from demonstrations of how to avoid leaving the safe set. Furthermore, \cite{O16}, \cite{O17}, and \cite{O18} propose an approach to find and remove points in the safe set that become infeasible from the safe set via the introduction of input constraints. 

Finally and most notably, are a number of the ideas brought forward by \cite{O19} and \cite{N26} that may bear some resemblance to the present work. Through what is referred to as future-focused CBFs (FF-CBFs), the safe set is constructed directly based on predicting the future collisions in the system within a finite horizon, if all agents were to follow a zero-control policy \cite{O19}. However, as will be clear soon, FF-CBFs differ from what is proposed especially in how the prediction is performed. Moreover, \cite{N26} formulates the backup strategy approach introduced by \cite{N27} in the CBF framework and extends it to the multi-agent safety problem.

This present study aims to expand upon and extend some of the ideas from the CBF source material to be applicable to an arbitrary input-constrained safety-critical system, through a new approach. In particular, it is suggested that alterations to the safe set due to input constraints may be accounted for via the introduction of an additional term into the CBF to compensate for the changes based on a prediction of the best-case (or near-best-case) scenario to fully stop the advancement toward the safe set’s boundary. It is shown in this study that doing so would guarantee that an admissible control law for the CBF will always exist, thereby solving the feasibility problem. Furthermore, the prediction-based nature of the approach would eliminate the need for problematic calculations such as consecutive differentiation, which greatly facilitates and streamlines the implementation process.

Before explaining the key ideas behind the present study in detail, however, the aforementioned theory behind CBFs shall be overviewed in Section~\ref{SecII}, followed by the particular example that inspired the current work. Having discussed the motivating background, the main contribution of this work referred to as “prediction-based control barrier functions” is introduced in Section~\ref{SecIII}, by presenting a rigorous theoretical foundation on how an existing control barrier function can be augmented with prediction to have feasibility guarantees in an input-constrained system, plus a part dedicated to its implementation. The functionality of prediction-based control barrier functions is then illustrated via two numerical examples in Section~\ref{SecIV}, demonstrating the advantage over the basic CBFs. The first example addresses the part that the prediction-based term plays in refining the super-level set with minimal conservativeness to guarantee the prevention of unsafe behavior when the inputs have constraints. While the second example considers a more detailed and complex aerospace problem that demonstrates the efficacy and applicability of the proposed prediction-based control barrier functions for practical engineering systems. The final Section~\ref{SecV} concludes the present study and discusses future research directives involving prediction-based control barrier functions.

\section{Background} \label{SecII}
In this section, an overview of some of the preliminaries for safety-critical control with control barrier functions (CBFs) is provided, which includes the theoretical background behind set invariance-based safety and CBFs, as well as the most common method for implementing CBF-based safety filters in a control problem. In the final part, a concise overview of the application of CBFs for the problem of adaptive cruise control, as given by \cite{O3} is provided that served as the inspiration for the generalized approach proposed in this study.

By and large, the theory behind CBFs is applicable to a general, nonlinear dynamical system:
\begin{equation}
\mathbf{\dot{x}}=\mathbf{f}(\mathbf{x},\mathbf{u})\text{ ,}
\label{EqO1}
\end{equation}
where $ \mathbf{x} \in \mathbb{R}^n $ and $ \mathbf{u}  \in \mathbb{R}^m $ are the state and input vectors respectively, and the function $ \mathbf{f} : (\mathbb{R}^n , \mathbb{R}^m) \rightarrow \mathbb{R}^n $ is assumed to be locally Lipschitz continuous. If we assume the input $ \mathbf{u} $ to be given by some control law $ \boldsymbol{\pi} : \mathbb{R}^n \rightarrow \mathbb{R}^m $, i.e., $ \mathbf{u} = \boldsymbol{\pi}(\mathbf{x}) $, \eqref{EqO1} becomes:
\begin{equation}
\mathbf{\dot{x}}=\mathbf{f}(\mathbf{x},\boldsymbol{\pi}(\mathbf{x}))=\mathbf{f}(\mathbf{x})\text{ .}
\label{EqO2}
\end{equation}
Furthermore, for the special but fairly common case of a control affine system:
\begin{equation}
\mathbf{\dot{x}}=\mathbf{f}(\mathbf{x})+\mathbf{G}(\mathbf{x})\mathbf{u}\text{ ,}
\label{EqO3}
\end{equation}
with locally Lipschitz continuous functions $ \mathbf{f} : \mathbb{R}^n \rightarrow \mathbb{R}^n $ and $ \mathbf{G} : \mathbb{R}^n \rightarrow \mathbb{R}^{n \times m} $, it is possible to make certain simplifications that would be much more useful and convenient for implementation in safety-critical problems. However, before going over such details, it may be useful to provide a formal definition of safety first.

\subsection{Control Barrier Functions} \label{SecII-A}
One of the most prominent and popular ways to define safety formally is through the notion of set invariance; which itself is defined as:
\begin{definition}[Forward Invariant Set]  \label{DefO1}
A set $ \textit{C} \subseteq \mathbb{R}^n $ is said to be forward invariant for a dynamical system given by \eqref{EqO2} if for every initial condition $ \mathbf{x}_0 \in \textit{C} $, the corresponding solution has the property $ \mathbf{x}(t) \in \textit{C} $ for all $ t \ge 0 $.
\end{definition}
Therefore, if \textit{C} is chosen to be a safe set (i.e., a set of states in which the behavior of the system is considered to be safe), a control law $ \mathbf{u} = \boldsymbol{\pi}(\mathbf{x}) : \mathbb{R}^n \rightarrow \mathbb{R}^m $ that renders the system \eqref{EqO1} forward invariant within the safe set \textit{C} is deemed safe.

As previously mentioned, obtaining such a safe law can be done through a variety of methods, with CBFs being one of the most popular and the method of choice for this work. So, we shall now turn our attention to providing a definition for control barrier functions.

\begin{definition} [Control Barrier Function \cite{O5}]  \label{DefO2}
Let \textit{C} be the super-level set for a continuously differentiable function $ h(\mathbf{x}) : \mathbb{R}^n \rightarrow \mathbb{R} $ with $ \frac{\partial h}{\partial\mathbf{x}}(\mathbf{x}) \neq \mathbf{0} $  when $ h(\mathbf{x}) = 0 $. The function $ h $ is a control barrier function for \eqref{EqO1} on \textit{C} if there exists an extended class $ \text{K}_\infty$ function $ \alpha $ such that for all $ \mathbf{x} \in \textit{C} $ there exists a $ \mathbf{u} \in \mathcal{U} \subseteq \mathbb{R}^m $ such that:
\begin{equation}
\underset{\mathbf{u}\in \mathcal{U}}{\mathop{\sup }}\,\dot{h}(\mathbf{x},\mathbf{u})>-\alpha [h(\mathbf{x})]\text{ .}
\label{EqO4}
\end{equation}
\end{definition}

The $ h(\mathbf{x}) $ given in this definition is also sometimes referred to as a zeroing control barrier function (ZCBF) because of its property $ h(\mathbf{x}) = 0 \;\forall\; \mathbf{x} \in \partial\textit{C} $ as opposed to an older, more Lyapunov-like definition now referred to as reciprocal CBF \cite{O5}, in which the barrier function would blow up on $ \partial\textit{C} $ instead. Still, we shall continue referring to $ h(\mathbf{x}) $ as a CBF in the interest of conciseness. Hence, given a CBF $ h $ for \eqref{EqO1} and a corresponding $ \alpha $, we can define the set of control inputs $ P_\text{CBF}(\mathbf{x}) $ for which the condition \eqref{EqO4} is met \cite{O10} and the following results:

\begin{theorem}[\cite{O5}\cite{O10}]  \label{ThrmO1}
If \textit{C} is the super-level set of a CBF $ h(\mathbf{x}) $ for a system \eqref{EqO1}, then the set $ P_\text{CBF}(\mathbf{x}) $ is non-empty for all $ \mathbf{x} \in \mathbb{R}^n $ and for any locally Lipschitz control law $ \mathbf{u} = \boldsymbol{\pi}(\mathbf{x}) $ with $ \boldsymbol{\pi}(\mathbf{x}) \in P_\text{CBF}(\mathbf{x}) $ for all $ \mathbf{x} \in \mathbb{R}^n $, the resulting \eqref{EqO2} is invariant on \textit{C}.
\end{theorem}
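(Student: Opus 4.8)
The plan is to prove the two assertions of Theorem~\ref{ThrmO1} in turn: first the non-emptiness of $P_\text{CBF}(\mathbf{x})$, and then the forward invariance of $C$ under any admissible feedback. The non-emptiness follows almost directly from Definition~\ref{DefO2}. Since $P_\text{CBF}(\mathbf{x})$ is the set of inputs satisfying $\dot{h}(\mathbf{x},\mathbf{u}) \geq -\alpha[h(\mathbf{x})]$, and \eqref{EqO4} asserts that the supremum of $\dot{h}$ over $\mathcal{U}$ strictly exceeds $-\alpha[h(\mathbf{x})]$, there must exist at least one $\mathbf{u}$ with $\dot{h}(\mathbf{x},\mathbf{u}) > -\alpha[h(\mathbf{x})]$, so $P_\text{CBF}(\mathbf{x}) \neq \emptyset$. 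I would flag two reconciliations here: the passage from the strict supremum condition to the non-strict membership inequality, and the fact that Definition~\ref{DefO2} only posits \eqref{EqO4} on $C$ while the theorem claims non-emptiness on all of $\mathbb{R}^n$. The latter is handled by the \emph{extended} nature of $\alpha$: for $h(\mathbf{x}) < 0$ the right-hand side $-\alpha[h(\mathbf{x})]$ is positive, consistent with a control that drives the state back toward $C$.

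For the invariance claim, the central tool is a scalar differential-inequality (comparison) argument applied to the composite map $t \mapsto h(\mathbf{x}(t))$ along a closed-loop trajectory. I would first observe that local Lipschitzness of $\mathbf{f}$ together with that of $\boldsymbol{\pi}$ renders the closed-loop vector field in \eqref{EqO2} locally Lipschitz, so a unique solution exists from each $\mathbf{x}_0 \in C$. Because $\boldsymbol{\pi}(\mathbf{x}(t)) \in P_\text{CBF}(\mathbf{x}(t))$ for all $t$, the chain rule yields the differential inequality
\[ \frac{d}{dt}\,h(\mathbf{x}(t)) = \dot{h}(\mathbf{x}(t),\boldsymbol{\pi}(\mathbf{x}(t))) \geq -\alpha[h(\mathbf{x}(t))]. \]
I would then introduce the comparison system $\dot{y} = -\alpha(y)$ with $y(0) = h(\mathbf{x}_0) \geq 0$ and invoke the comparison lemma to conclude $h(\mathbf{x}(t)) \geq y(t)$ on the interval of existence. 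The key structural fact is that $\{y \geq 0\}$ is invariant for this comparison system: since $\alpha$ is extended class $\text{K}_\infty$ we have $\alpha(0)=0$, so $y \equiv 0$ is an equilibrium and a trajectory starting at $y(0) \geq 0$ cannot cross into negative values. Hence $y(t) \geq 0$, which forces $h(\mathbf{x}(t)) \geq 0$, i.e. $\mathbf{x}(t) \in C$ for all $t \geq 0$, establishing the claim.

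The step I expect to be the main obstacle is the rigorous invocation of the comparison lemma, because an extended class $\text{K}_\infty$ function is only guaranteed to be continuous and strictly increasing, not locally Lipschitz, so $\dot{y} = -\alpha(y)$ need not admit a unique solution. I would address this by appealing to the version of the comparison principle valid for merely continuous right-hand sides (phrased via maximal/minimal solutions, as in the classical theory of differential inequalities); alternatively, if one accepts the weaker boundary conclusion, by a direct Nagumo-style contradiction argument on the first exit time $t_1 = \inf\{t : h(\mathbf{x}(t)) < 0\}$, where continuity gives $h(\mathbf{x}(t_1)) = 0$ and the inequality gives $\frac{d}{dt}h|_{t_1} \geq -\alpha(0) = 0$, precluding a downward crossing. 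A secondary technical point worth isolating is forward completeness: the argument only shows the trajectory stays in $C$ on its maximal interval of existence, so one must separately argue (or assume, e.g. via compactness or boundedness of $C$) that solutions do not escape to infinity in finite time while confined to $C$.
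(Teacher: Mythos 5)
The paper itself contains no proof of Theorem~\ref{ThrmO1}: it is stated as imported background, cited from \cite{O5} and \cite{O10}, so there is no in-paper argument to compare yours against. Measured against the standard proof in those references, your proposal follows essentially the same route — non-emptiness read off from the strict inequality in Definition~\ref{DefO2}, and forward invariance via a comparison argument applied to $t \mapsto h(\mathbf{x}(t))$ with comparison system $\dot y = -\alpha(y)$. Your handling of the real technical points (local Lipschitzness of the closed loop, non-Lipschitz $\alpha$ handled through minimal solutions, and the forward-completeness caveat, which the CBF literature itself usually dispatches via compactness of $C$) is sound.

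Two repairs are needed. First, your fallback ``first exit time'' argument is not valid as stated: $h(\mathbf{x}(t_1)) = 0$ together with $\frac{d}{dt}h(\mathbf{x}(t))|_{t_1} \ge 0$ does not preclude $h$ from becoming negative immediately afterwards (consider $h(t) = -(t-t_1)^3$). The correct elementary argument uses the differential inequality on the interval where $h<0$, not at the single crossing point: if $h(\mathbf{x}(t^*))<0$, set $s=\sup\{t<t^*: h(\mathbf{x}(t))\ge 0\}$; on $(s,t^*]$ one has $h<0$, hence $\dot h \ge -\alpha(h) > 0$ because $\alpha$ is extended class $\text{K}_\infty$, so $h(\mathbf{x}(t^*)) > h(\mathbf{x}(s)) = 0$, a contradiction. (Note this step uses the theorem's hypothesis that $\boldsymbol{\pi}(\mathbf{x})\in P_\text{CBF}(\mathbf{x})$ on all of $\mathbb{R}^n$, not merely on $C$.) The same observation — that $-\alpha(y)>0$ for $y<0$ — is also what makes $\{y\ge 0\}$ invariant for \emph{every} solution of the non-unique comparison ODE; your stated reason, that $y\equiv 0$ is an equilibrium, is insufficient when uniqueness fails, since solutions of continuous ODEs can depart from equilibria. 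Once this is fixed, the comparison lemma becomes unnecessary altogether. Second, you are right that non-emptiness of $P_\text{CBF}(\mathbf{x})$ on all of $\mathbb{R}^n$ cannot be derived from Definition~\ref{DefO2}, which imposes \eqref{EqO4} only on $C$; but your proposed reconciliation via the extendedness of $\alpha$ proves nothing — the claim is simply not available from the definition as this paper states it. In the cited references the CBF inequality is imposed on an open set containing $C$ (or on $\mathbb{R}^n$), and non-emptiness is asserted exactly where the inequality holds; the honest course here is to prove non-emptiness on $C$ and note that the $\mathbb{R}^n$-wide claim is an imprecision inherited from the paper's restatement.
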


Based on these, one can synthesize controllers that are formally guaranteed to keep the system in question inside a safe set of states which would by extension, also make these laws safe. One may take a number of different approaches to incorporating condition \ref{EqO4} in the control synthesis process, a common method is through quadratic programming (QP) optimization.

\subsection{Implementation with Quadratic Programming Optimization} \label{SecII-B}
The original paper which introduced CBFs as a tool for safe control synthesis, also suggested that the control barrier condition could also be used alongside a control Lyapunov function (CLF) condition in an optimization-based control law \cite{O3}. Excluding the soft CLF constraint, the optimization problem could instead be solved to make a minimal modification to the reference control signal, acting as a safety filter (Fig.~\ref{FigO1}).
\begin{equation}
\begin{array}{*{35}{l}}
{{\mathbf{u}}^{*}}=\mathbf{u}+\underset{\Delta \mathbf{u}\in {{\mathbb{R}}^{m}}}{\mathop{\arg \min }}\,\frac{1}{2}\Delta {{\mathbf{u}}^{\text{T}}}\mathbf{H}\Delta \mathbf{u}  \\
\begin{matrix}
\text{s}\text{.t}\text{.} & \dot{h}(\mathbf{x},\mathbf{u}+\Delta \mathbf{u})+\alpha [h(\mathbf{x})]\ge 0  \\
\end{matrix}  \\
\end{array}
\label{EqO5}
\end{equation}
Here, $ \mathbf{H} \in \mathbb{R}^{m \times m} $ is a positive definite matrix.

As can be seen, \eqref{EqO5} represents a convex optimization problem with nonlinear constraints that is problematic given the fact that in practice, it needs to be solved at each time step. However, the nonlinear constraint is simplified to a linear one if the system is in the control affine form as given by \eqref{EqO3}, allowing \eqref{EqO5} to be solved using QPs.

If there are no limitations to how much the control signal can be modified (i.e., $ \Delta\mathbf{u} \in \mathbb{R}^m $), a well-posed definition of the CBF(s) is sufficient for the optimization to be feasible for all $ \mathbf{x}\in\textit{C} $. However, if additional constraints were to be introduced, namely the input constraints, no such guarantees would exist for the feasibility of the problem. Indeed, it can be shown that an improper selection of the CBF may easily make the QP infeasible \cite{O3}. As such, it is important to define the CBF with consideration of the input constraints. Before introducing the findings of this work, it may be useful to provide some context by going over the specific approach taken by \cite{O3} for the problem of adaptive cruise control to address input constraints.

\begin{figure}[h]
\centerline{\includegraphics[width=\columnwidth]{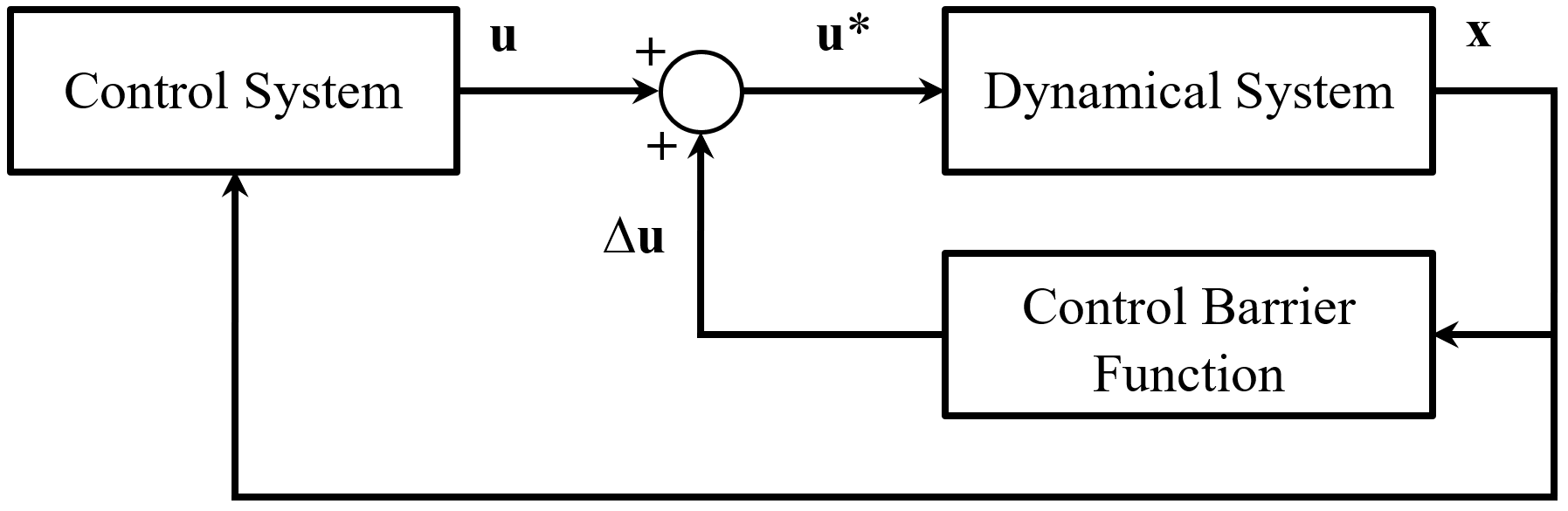}}
\caption{CBF control signal safety filtering.}
\label{FigO1}
\end{figure}

\subsection{Force-Based CBFs for Adaptive Cruise Control}
A more thorough description of adaptive cruise control can be found in \cite{O3} and \cite{O4}; but as the details of this problem are not a key concern of this paper, we shall only go over the essentials that served to inspire the present work.

Adaptive cruise control consists of maintaining an automotive vehicle at a set cruising velocity while avoiding collisions with other vehicles. Considering a simplified single-lane version of the problem, the main vehicle’s motion would be described by the following governing equation:
\begin{equation}
m\frac{dv}{dt}={{F}_{w}}-{{F}_{r}}
\label{EqO6}
\end{equation}
Where $ m $ is the vehicle’s mass, $ v $ is its velocity, $ F_w $ is the wheel force given as the control input, and $ F_r $ is the rolling resistance force that is a function of $ v $. Assuming that the front vehicle is moving at a constant velocity $ v_f $, the distance between the two vehicles will be given by:
\begin{equation}
\frac{dz}{dt}={{v}_{f}}-v
\label{EqO7}
\end{equation}
So, the vehicle’s control system is expected to reach and maintain a set cruise velocity $ v_d $ when possible and slow down to avoid exceeding a minimum safe distance given by $ z_{\min} = \alpha_s v $ (with $ \alpha_s $ being a constant) when appropriate. To this end, a CBF can be defined as:
\begin{equation}
h=z-{{\alpha }_{s}}v
\label{EqO8}
\end{equation}

However, the main vehicle in reality is also limited by how fast it can accelerate and decelerate. That is to say, although the CBF is positive for all $ z > z_{\min} $ and hence indicates safety for all such cases, if the main vehicle were to be going too fast at certain “safe” distances from the front vehicle, even given the maximum possible effort to do so, it would be unable to decelerate to speeds below that of the front vehicle before the minimum safe distance limit is crossed. This is the exact feasibility issue that was alluded to. The approach in \cite{O3} addressed this problem by incorporating into the CBF, the change in relative distance given a full deceleration to match the front vehicle’s velocity.
\begin{equation}
h=z-{{\alpha }_{s}}v-\frac{1}{2}\frac{{{({{v}_{f}}-v)}^{2}}}{{{c}_{d}}g}
\label{EqO9}
\end{equation}
Where $ c_d $ is the maximum possible deceleration rate. By doing this, the safe set corresponding to the CBF will only include those states in which a full (constrained) deceleration before hitting the boundary is possible. This idea can be generalized to apply to any system; which is the exact aim of the present work.

\section{Input-Constrained Control Barrier Functions with Prediction} \label{SecIII}
In many control problems, it may be difficult or even impossible to analytically predict the behavior of the system subject to input constraints. Consequently, for such cases, the governing equations could be used to develop numerical algorithms that can fulfill the same purpose as that of the exact solution while maintaining the formal guarantees at the cost of added computational costs. In this work, one such approach shall be proposed that generalizes the idea of force-based CBFs for adaptive cruise control as presented by \cite{O3} to be applicable for any system where a particular variable is to be kept inside a boundary.

The force-based CBF idea may be extended to the general case through the prediction-based CBF which shall be defined as follows:

\begin{definition}[Prediction-Based Control Barrier Function]  \label{DefO3}
Let $ h(\mathbf{x}) : \mathbb{R}^n \rightarrow \mathbb{R} $ be a CBF on $ \textit{C} \subset \mathbb{R}^n $ for \eqref{EqO1} with the imposed input constraint $ \mathbf{u} \in \mathcal{U}_\text{c} \subset \mathbb{R}^m $. The function $ h_P(\mathbf{x}) = h(\mathbf{x}) + \delta h(\mathbf{x}) $, where
\begin{equation}
\begin{array}{*{35}{l}}
\delta h(\mathbf{x})=\left\{ \begin{array}{*{35}{l}}
\int_{t}^{T}{\dot{h}(\mathbf{x}(\tau ),{{\mathbf{u}}_{0}}(\mathbf{x}(\tau )))d\tau } & (\dot{h}<0)  \\
0 & (\dot{h}\ge 0)  \\
\end{array} \right.  \\
\begin{matrix}
\text{s}\text{.t}\text{.} & \dot{h}(T)=0 & , & \dot{h}(\tau )<0\;\forall\; \tau <T  \\
\end{matrix}  \\
\end{array}
\label{EqO10}
\end{equation}
is a Prediction-Based Control Barrier Function (PB-CBF) on \textit{C} with respect to the constraint $ \mathbf{u} \in \mathcal{U}_\text{c} $, if $ \mathbf{u}_0(\mathbf{x}) : \textit{C} \rightarrow \mathcal{U}_\text{c} $ is a controller under which $ \dot{h}(\mathbf{x},\mathbf{u}_0(\mathbf{x})) $ vanishes in finite time for all $ \mathbf{x} \in \textit{C} $.
\end{definition}

According to this definition, in order to make an existing CBF in compliance with the input constraints, the PB-CBF adds an additional term that accounts for the margin required to halt the advance toward the boundary of \textit{C}. Subsequently, the conditions for input-constrained safety can be presented.
\begin{theorem} \label{ThrmO2}
Given a dynamical system \eqref{EqO1} and the sets \textit{C} and $ \mathcal{U}_\text{c} $ with the corresponding CBF $ h(\mathbf{x}) $ on \textit{C}. If $ h_P(\mathbf{x}) $ is a PB-CBF on \textit{C} with the control law $ \mathbf{u}_0(\mathbf{x}) : \textit{C} \rightarrow \mathcal{U}_\text{c} $, then (\textit{i}) there exists a controller $ \boldsymbol{\pi}(\mathbf{x}) $ such that
\begin{equation}
\dot{h}(\mathbf{x},\boldsymbol{\pi}(\mathbf{x}))-\dot{h}(\mathbf{x},{{\mathbf{u}}_{0}}(\mathbf{x}))+\alpha \left[ {{h}_{P}}(\mathbf{x}) \right]>0 \;,
\label{EqO11}
\end{equation}
which renders the system invariant on a subset of \textit{C} given by:
\begin{equation}
{\textit{C}_\textit{P}}:=\left\{ \mathbf{x}\in {{\mathbb{R}}^{n}}\,\,:\,\,{{h}_{P}}(\mathbf{x})>0 \right\}
\label{EqO12}
\end{equation}
subject to the input constraint $ \boldsymbol{\pi}(\mathbf{x}) \in \mathcal{U}_\text{c} $. (\textit{ii}) Furthermore, if $ \mathbf{u}_0(\mathbf{x}) $ is the optimal control law that minimizes $ \delta h(\mathbf{x}) $ as given by \eqref{EqO10}, then $ \textit{C}_\textit{P} $ will be the largest subset of \textit{C} that can be made invariant given the constraint $ \boldsymbol{\pi}(\mathbf{x}) \in \mathcal{U}_\text{c} \;\forall\; \mathbf{x}\in\textit{C}_\textit{P}$.
\end{theorem}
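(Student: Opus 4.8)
The plan is to first reinterpret the augmenting term of Definition~\ref{DefO3}. Integrating the rate along the braking prediction gives $\delta h(\mathbf{x})=\int_t^T \dot{h}\,d\tau = h(\mathbf{x}(T))-h(\mathbf{x}(t))$, so that $h_P(\mathbf{x})=h(\mathbf{x}(T))$ is exactly the value $h$ attains at the instant $T$ when the advance toward $\partial\textit{C}$ is halted under $\mathbf{u}_0$. Hence $\textit{C}_\textit{P}$ in \eqref{EqO12} is precisely the set of states from which $\mathbf{u}_0$ arrests the decay of $h$ before the trajectory can leave $\textit{C}$. This identification is the backbone of both parts, and I would establish it before anything else.

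For part (\textit{i}) I would prove feasibility and invariance separately. Feasibility is immediate by exhibiting $\boldsymbol{\pi}=\mathbf{u}_0$: the first two terms of \eqref{EqO11} cancel, leaving $\alpha[h_P(\mathbf{x})]>0$, which holds for all $\mathbf{x}\in\textit{C}_\textit{P}$ since $\alpha$ is extended class $\mathrm{K}_\infty$ and $h_P>0$ there; moreover $\mathbf{u}_0(\mathbf{x})\in\mathcal{U}_\text{c}$ by Definition~\ref{DefO3}, so the admissible set is nonempty \emph{under} the input constraint. For invariance I would show that along the $\mathbf{u}_0$-flow the predicted halting point is fixed by the semigroup property while the advance persists ($\dot{h}<0$), so $h_P$ remains constant at $h(\mathbf{x}(T))>0$. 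Consequently $\mathbf{u}_0$ holds the trajectory in $\textit{C}_\textit{P}$, and invoking Theorem~\ref{ThrmO1} with $h_P$ in place of $h$ would extend invariance to any locally Lipschitz law selected from the admissible set of \eqref{EqO11}.

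The main obstacle lies in rigorously identifying the left-hand side of \eqref{EqO11} with the CBF condition $\dot{h}_P+\alpha[h_P]\ge 0$ for a \emph{general} control, because differentiating $\delta h$ along the actual $\boldsymbol{\pi}$-trajectory must account for the dependence of the horizon $T$ on the state. I expect the sensitivity of $T$ to be annihilated by the terminal condition $\dot{h}(T)=0$ imposed in \eqref{EqO10} --- which is exactly why that condition appears --- yielding $\partial_{\mathbf{x}}\delta h\cdot\mathbf{f}(\mathbf{x},\mathbf{u}_0)=-\dot{h}(\mathbf{x},\mathbf{u}_0)$. The residual discrepancy between \eqref{EqO11} and $\dot{h}_P$, proportional to $\boldsymbol{\pi}-\mathbf{u}_0$ in the control-affine case, vanishes on $\partial\textit{C}_\textit{P}$: there $h_P=0$ forces $\alpha[h_P]=0$, so \eqref{EqO11} can only be met by controls whose effect on $h$ matches that of full braking, i.e.\ $\boldsymbol{\pi}\!\to\!\mathbf{u}_0$, giving $\dot{h}_P\ge 0$ on the boundary. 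Nagumo's condition then closes the invariance argument, and this boundary analysis is the step I would treat most carefully.

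For part (\textit{ii}) I would argue maximality by contradiction. Taking $\mathbf{u}_0$ as the optimizer of \eqref{EqO10} means it halts the advance with the least possible loss in $h$, so $h_P(\mathbf{x})=h(\mathbf{x}(T))$ is the \emph{largest} value of $h$ attainable at a halting instant from $\mathbf{x}$ under any admissible control. Suppose some $\boldsymbol{\pi}\in\mathcal{U}_\text{c}$ renders a set $\Omega\subseteq\textit{C}$ invariant with a point $\mathbf{x}\in\Omega\setminus\textit{C}_\textit{P}$, so $h_P(\mathbf{x})\le 0$. Then even the best braking reaches $h\le 0$, and since $\dot{h}<0$ strictly before the halting instant, $h$ must cross zero and the trajectory leaves $\textit{C}$, contradicting invariance. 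The delicate point is ruling out controls under which $\dot{h}$ never vanishes: for these the advance is never arrested, $h$ keeps decreasing and eventually exits $\textit{C}$, so they cannot preserve a state with $h_P\le 0$ either. Hence every invariant subset of $\textit{C}$ is contained in $\textit{C}_\textit{P}$, establishing that $\textit{C}_\textit{P}$ is the largest such set.
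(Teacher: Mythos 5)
Your core argument for part (\textit{i}) is essentially the paper's own proof: use the Fundamental Theorem of Calculus together with the terminal condition $\dot{h}(T)=0$ to get $\delta\dot{h}(\mathbf{x})=-\dot{h}(\mathbf{x},\mathbf{u}_0(\mathbf{x}))$ (the paper's \eqref{EqO14}), read \eqref{EqO11} as the resulting Nagumo condition, and exhibit $\boldsymbol{\pi}=\mathbf{u}_0$ as the admissible witness, for which the first two terms cancel and $\alpha[h_P]>0$ on $\textit{C}_\textit{P}$ closes the argument. Your explicit identification $h_P(\mathbf{x})=h(\mathbf{x}(T))$ and the semigroup observation that $h_P$ is constant along the $\mathbf{u}_0$-flow are cleaner, more explicit versions of what the paper leaves implicit, and they do give invariance of $\textit{C}_\textit{P}$ under the specific witness $\mathbf{u}_0$, which is all the existence claim strictly needs.

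Where you go beyond the paper, two steps do not hold up. First, your boundary analysis for \emph{general} admissible controls: on $\partial\textit{C}_\textit{P}$, condition \eqref{EqO11} requires $\dot{h}(\mathbf{x},\boldsymbol{\pi})>\dot{h}(\mathbf{x},\mathbf{u}_0)$, i.e.\ strictly \emph{better} than braking, not ``matching'' it, so it does not force $\boldsymbol{\pi}\to\mathbf{u}_0$; and it does not imply $\dot{h}_P\ge 0$ there. In the control-affine case, using $\nabla h_P^{\text{T}}\,[\mathbf{f}+\mathbf{G}\mathbf{u}_0]=0$ (your semigroup property), one has $\dot{h}_P(\mathbf{x},\boldsymbol{\pi})=\nabla h_P^{\text{T}}\mathbf{G}(\boldsymbol{\pi}-\mathbf{u}_0)$, whereas \eqref{EqO11} (equivalently \eqref{EqO19}) controls the sign of $\nabla h^{\text{T}}\mathbf{G}(\boldsymbol{\pi}-\mathbf{u}_0)$; since $\nabla h_P$ picks up the sensitivity of the predicted halting point to the state, it is in general not positively aligned with $\nabla h$, so positivity of the latter expression does not yield nonnegativity of the former. (To be fair, this gap is inherited from the paper itself, whose \eqref{EqO14} differentiates $\delta h$ along the prediction flow and silently applies the result along an arbitrary $\boldsymbol{\pi}$-flow; the theorem's existence claim survives only because of the $\mathbf{u}_0$ witness.) Second, for part (\textit{ii}) the paper offers no proof to compare against (it is declared trivial and omitted), and your sketch has a genuine flaw: in the case where $\dot{h}$ never vanishes under $\boldsymbol{\pi}$, a strictly decreasing $h$ need \emph{not} exit \textit{C} --- it can decay asymptotically to a positive limit with $\dot{h}\to 0^-$. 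The repair is a concatenation/dynamic-programming argument: following $\boldsymbol{\pi}$ up to any time $t'$ (while $\dot{h}<0$) and then braking optimally is itself an admissible halting policy, so optimality of $\mathbf{u}_0$ gives $h_P(\mathbf{x}(t'))\le h_P(\mathbf{x}(t))\le 0$ along the trajectory; the contradiction is then extracted at the first halting instant, where $\delta h=0$ forces $h\le 0$, with a limiting argument covering the asymptotic case.
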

\begin{proof}
Consider the function $ h_P(\mathbf{x}) $ as given by Definition~\ref{DefO3}; differentiating $ h_P $ yields
\begin{equation}
{{\dot{h}}_{P}}(\mathbf{x})=\dot{h}(\mathbf{x},\boldsymbol{\pi}(\mathbf{x}))+\delta \dot{h}(\mathbf{x})
\label{EqO13}
\end{equation}
Where $ \boldsymbol{\pi}(\mathbf{x}) $ is the control law imposed on the system. Note that $ \delta h $, as defined is continuously differentiable. This is because the integral in \eqref{EqO10} will always be finite under the prescribed control laws, it vanishes as $ \dot{h} \rightarrow 0^+ $ since $ T $ vanishes as $ \dot{h} \rightarrow 0^+ $, and its derivative is also continuous (the last claim will be verified shortly). Therefore, expanding $ \delta \dot{h} $ based on \eqref{EqO10} and applying the Fundamental Theorem of Calculus allows the following conclusion:
\begin{align}
\delta \dot{h}(\mathbf{x})& =\frac{d}{dt}\int\limits_{t}^{T(t)}{\dot{h}(\mathbf{x}(\tau ),{{\mathbf{u}}_{0}})d\tau } \nonumber \\ 
& =\dot{T}\dot{h}(\mathbf{x}(T),{{\mathbf{u}}_{0}})-\dot{h}(\mathbf{x}(t),{{\mathbf{u}}_{0}}) \nonumber \\ 
& =-\dot{h}(\mathbf{x}(t),{{\mathbf{u}}_{0}}) \; .
\label{EqO14}
\end{align}
Note that $ \dot{h}(\mathbf{x}(T),\mathbf{u}_0) = 0 $ by definition. Since $ h $ is differentiable, it follows from \eqref{EqO14} that so is $ \delta h $. Substituting \eqref{EqO14} into \eqref{EqO13} and subsequently applying Nagumo’s theorem leads to the condition \eqref{EqO11} for invariance on $ \textit{C}_\textit{P} $. However, since $ \mathbf{u}_0(\mathbf{x}) \in \mathcal{U}_\text{c} \;\forall\; \mathbf{x}\in\textit{C}_\textit{P} $, selecting $ \boldsymbol{\pi}(\mathbf{x}) = \mathbf{u}_0(\mathbf{x}) $ would allow \eqref{EqO11} to be trivially met since $ \alpha[h_P(\mathbf{x})] > 0 \;\forall\; \mathbf{x} \in \textit{C}_\textit{P} $. This is true while also satisfying the constraint $ \boldsymbol{\pi}(\mathbf{x}) \in \mathcal{U}_\text{c} \;\forall\; \mathbf{x} \in \textit{C}_\textit{P} $. Consequently, we have
\begin{equation}
\sup_{\mathbf{u} \in \mathcal{U}_\text{c}}\{\dot{h}(\mathbf{x},\mathbf{u}) + \alpha[h_P(\mathbf{x})]\} \ge \alpha[h_P(\mathbf{x})] > 0 \;\forall\; \mathbf{x} \in \textit{C}_\textit{P} \;,
\label{EqO15}
\end{equation}
suggesting that the set of admissible controllers for \eqref{EqO11} will always be non-empty at the minimum containing $ \mathbf{u}_0 $. It follows that $ h_P $ is a CBF for $ \textit{C}_\textit{P} $, implying (\textit{i}). The proof for (\textit{ii}) is fairly trivial and will therefore be omitted.
\end{proof}

It may be useful to take a step back and assess the implications of this. In the absence of input constraints, using a basic CBF $ h(\mathbf{x}) $ suffices for enforcing the safety requirements. However, as \cite{O3} has shown for the example of adaptive cruise control and as we shall show for other examples in Section~\ref{SecIV}, with the introduction of input limitations the constraint that can be derived from this CBF may no longer be feasibly enforceable using QPs. That is to say, the activation of the CBF condition may occur too close to the safe barrier, requiring an unattainable control effort. This is evidently caused by the limited ability to bring the rate of change $ \dot{h} $ to zero. So, inspired by \cite{O3}, a prediction term \eqref{EqO10} is defined and incorporated into the existing CBF to account for this limitation to ensure that the new CBF would at the very least work with the control law that the predictions are performed with, thus guaranteeing input-constrained feasibility. Note that the conditions on the control law in \eqref{EqO10} are not very restrictive, as it is not required for the controller to stabilize the system or even $ h $ in particular, but rather to ensure that $ \dot{h} $ vanishes \textit{momentarily} at \textit{some point} in the future. Furthermore, note that $ T $ in \eqref{EqO10} is itself given by
\begin{equation}
T(t)=t+\int\limits_{\dot{h}(t)}^{0}{\frac{d\dot{h}}{{\ddot{h}}}} \;;
\label{EqO16}
\end{equation}
that is not easy to evaluate, much less differentiate for complex high-order dynamical systems. Conveniently however, it was also shown that since $ \dot{h} $ must vanish at $ T $, the problematic task of its explicit evaluation becomes unnecessary for finding the invariance condition.

Based on what has been discussed so far, it is easy to see the distinction between the proposed PB-CBFs and the previously alluded approaches such as FF-CBFs \cite{O19} and backup strategy-based CBFs \cite{N26}. Both of these approaches use a fixed finite time horizon for the predictions, while the PB-CBF approach uses an adaptive time horizon that depends on $ \dot{h} $. However, as previously mentioned, \cite{O19} uses a zero-control policy in the predictions. The approach of \cite{N26} however, does use a control policy, but this control policy is meant to guide the system to a (not necessarily maximal) control invariant set, which in turn requires the determination of such a set. This is in contrast to the control policy used for the PB-CBF method which is only needed to bring the rate $ \dot{h} $ to zero. Finally, \cite{N26} constructs the feasible CBF based on the positive functions defined on the safe and control invariant sets, while a feasible PB-CBF is constructed by adding to an existing unconstrained CBF its total change over the time it takes for $ \dot{h} $ to vanish.

The condition \eqref{EqO11} as given by Theorem~\ref{ThrmO2} is general and can be applied to any Lipschitz continuous nonlinear dynamical system. However, this condition may be simplified if the affine dynamics of \eqref{EqO3} is encountered. Let us define the following for the sake of convenience:
\begin{equation}
\begin{matrix}
\mathbf{c}_{i}^{\text{T}}(\mathbf{x}):=\frac{\partial {{h}_{i}}}{\partial \mathbf{x}} & \text{for any index }i
\end{matrix} \;.
\label{EqO17}
\end{equation}
Hence, we have:
\begin{equation}
\dot{h}(\mathbf{x},\mathbf{u})={{\mathbf{c}}^{\text{T}}}\mathbf{\dot{x}}={{\mathbf{c}}^{\text{T}}}\mathbf{f}(\mathbf{x})+{{\mathbf{c}}^{\text{T}}}\mathbf{G}(\mathbf{x})\mathbf{u} \;;
\label{EqO18}
\end{equation}
and so, \eqref{EqO11} simplifies to:
\begin{equation}
{{\mathbf{c}}^{\text{T}}}\mathbf{G}(\mathbf{x})(\mathbf{u}-{{\mathbf{u}}_{0}})+\alpha \left[ h(\mathbf{x})+\delta h(\mathbf{x}) \right]>0
\label{EqO19}
\end{equation}
It is interesting to note that this condition is no longer dependent on the dynamics, as $ L_\mathbf{f}h = \mathbf{c}^\text{T}\mathbf{f} $ is nowhere to be found, and it only depends on how “far” the control is from the nominal control law.

So to use the QP approach (Section~\ref{SecII-B}) to implement the PB-CBF, $ \delta h $ needs to be evaluated at each time step using numerical methods (or analytically if possible) before simultaneously enforcing \eqref{EqO19} in the optimization problem to guarantee invariance. 

In certain cases, the safety requirements may not be enforceable using a single CBF. An example of this (as it shall be seen in Section~\ref{SecIV}) is when a certain parameter in the system is required to be confined to both an upper and a lower bound as $ z_\text{min} < z < z_\text{max} $. In such cases, the upper and lower requirements may be enforced by defining two separate CBFs that are feasible and building the corresponding PB-CBFs individually. That is, given that the CBFs meet certain criteria as follows:
\begin{proposition} \label{PropO1}
Given the dynamical system of \eqref{EqO3} with the input constraints of $ \mathbf{u} \in \mathcal{U}_\text{c} $ and the functions $ h_{P1}(\mathbf{x}) = h_1(\mathbf{x}) + \delta h_1(\mathbf{x}) $ and $ h_{P2}(\mathbf{x}) = h_2(\mathbf{x}) + \delta h_2(\mathbf{x}) $, if these functions are individually PB-CBFs on the sets $ \textit{C}_1 $ and $ \textit{C}_2 $ respectively, and it holds that $ \mathbf{c}_1~/~\|\mathbf{c}_1\| = -\mathbf{c}_2~/~\|\mathbf{c}_2\| $ for all $ \mathbf{x} \in \textit{C}_1 \cap \textit{C}_2 $, then $ h_{P1} $ and $ h_{P2} $ are simultaneously PB-CBFs on $ \textit{C}_1 \cap \textit{C}_2 $ with respect to $ \mathbf{u} \in \mathcal{U}_\text{c} $.
\end{proposition}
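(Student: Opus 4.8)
The plan is to establish that the combined safety filter, which must enforce both PB-CBF constraints at once, remains feasible at every $\mathbf{x}\in\textit{C}_1\cap\textit{C}_2$; that is, to exhibit a single $\mathbf{u}\in\mathcal{U}_\text{c}$ satisfying both instances of \eqref{EqO11} simultaneously. Following \eqref{EqO13}--\eqref{EqO14}, I would write the condition for each $h_{Pi}$ in the form $\dot{h}_i(\mathbf{x},\mathbf{u})+\delta\dot{h}_i(\mathbf{x})+\alpha[h_{Pi}(\mathbf{x})]>0$, and record two facts that make the construction transparent: the margin rate satisfies $\delta\dot{h}_i\ge 0$ (it equals $-\dot{h}_i(\mathbf{x},\mathbf{u}_{0,i})$ on the active branch, where that rate is negative, and vanishes otherwise), and $\alpha[h_{Pi}]>0$ throughout $\textit{C}_1\cap\textit{C}_2$ since $\alpha$ is extended class-$\text{K}_\infty$ and $h_{P1},h_{P2}>0$ there. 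Note that $\delta\dot{h}_i$ is a function of $\mathbf{x}$ alone, so the only $\mathbf{u}$-dependence sits in $\dot{h}_i(\mathbf{x},\mathbf{u})$.

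I would then exploit the anti-parallel hypothesis. Writing $\mathbf{c}_2=-k\,\mathbf{c}_1$ with $k:=\|\mathbf{c}_2\|/\|\mathbf{c}_1\|>0$ on $\textit{C}_1\cap\textit{C}_2$, the two rates are linked by $\dot{h}_2(\mathbf{x},\mathbf{u})=\mathbf{c}_2^{\text{T}}\dot{\mathbf{x}}=-k\,\dot{h}_1(\mathbf{x},\mathbf{u})$ for every $\mathbf{u}$, so both PB-CBF constraints act only through the single affine functional $\mathbf{u}\mapsto\dot{h}_1(\mathbf{x},\mathbf{u})$. Concretely, the first constraint reads $\dot{h}_1(\mathbf{x},\mathbf{u})>\beta_1:=-\delta\dot{h}_1-\alpha[h_{P1}]$, and the second, after dividing by $-k$, reads $\dot{h}_1(\mathbf{x},\mathbf{u})<\beta_2:=\left(\delta\dot{h}_2+\alpha[h_{P2}]\right)/k$. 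By the two facts above, $\beta_1<0<\beta_2$, so the admissible band $(\beta_1,\beta_2)$ for $\dot{h}_1(\mathbf{x},\mathbf{u})$ is non-empty and contains $0$; the opposing constraints are therefore automatically compatible, which is precisely where the anti-parallel condition does its work.

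It remains to realize a value of $\dot{h}_1(\mathbf{x},\mathbf{u})$ inside $(\beta_1,\beta_2)$ with an admissible input, and this reachability step is the one I expect to require the most care. By Theorem~\ref{ThrmO2} applied to each PB-CBF separately, $\mathbf{u}_{0,1}\in\mathcal{U}_\text{c}$ satisfies the first constraint and $\mathbf{u}_{0,2}\in\mathcal{U}_\text{c}$ the second; since $\mathbf{u}\mapsto\dot{h}_1(\mathbf{x},\mathbf{u})$ is affine and $\mathcal{U}_\text{c}$ is convex (as with the usual box constraints), the segment joining $\mathbf{u}_{0,1}$ and $\mathbf{u}_{0,2}$ lies in $\mathcal{U}_\text{c}$ and sweeps $\dot{h}_1$ continuously from $\dot{h}_1(\mathbf{x},\mathbf{u}_{0,1})>\beta_1$ to $\dot{h}_1(\mathbf{x},\mathbf{u}_{0,2})<\beta_2$; since $\beta_1<\beta_2$, some point of this segment lands in $(\beta_1,\beta_2)$ and hence meets both constraints. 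This produces the required common admissible control, and forward invariance of $\textit{C}_1\cap\textit{C}_2$ then follows by invoking Theorem~\ref{ThrmO2} for $h_{P1}$ and $h_{P2}$ individually. The delicate points to nail down are the sign claim $\delta\dot{h}_i\ge 0$ (inherited from the construction of $\delta h$ in Definition~\ref{DefO3}) and the convexity of $\mathcal{U}_\text{c}$ underpinning the interpolation.
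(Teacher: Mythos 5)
Your proof is sound, but it takes a genuinely different route from the paper's. The paper argues by cases on which barrier is unavoidably tightening: if $\sup_{\mathbf{u}\in\mathcal{U}_\text{c}}\dot{h}_1(\mathbf{x},\mathbf{u})\le 0$, the anti-parallel hypothesis forces $\inf_{\mathbf{u}\in\mathcal{U}_\text{c}}\dot{h}_2(\mathbf{x},\mathbf{u})\ge 0$ (this is \eqref{EqO20}), hence $\delta h_2=0$, $h_{P2}=h_2>0$ on the intersection, and the $h_{P2}$ condition holds automatically for \emph{any} admissible control --- in particular for one admissible for $h_{P1}$, which Theorem~\ref{ThrmO2} supplies; the case $\sup_{\mathbf{u}\in\mathcal{U}_\text{c}}\dot{h}_2\le 0$ is symmetric. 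That case split is what yields the paper's structural remark that only one PB-CBF is ever ``active'' at a time. You instead collapse both conditions onto the single affine functional $\mathbf{u}\mapsto\dot{h}_1(\mathbf{x},\mathbf{u})$, show the admissible band $(\beta_1,\beta_2)$ straddles zero, and reach it by interpolating between $\mathbf{u}_{0,1}$ and $\mathbf{u}_{0,2}$. What your version buys is uniformity: it also covers the intermediate configuration $\inf_{\mathbf{u}}\dot{h}_1<0<\sup_{\mathbf{u}}\dot{h}_1$, where neither of the paper's two cases applies and simultaneous feasibility still needs an argument; the paper's proof is silent there. The cost is an extra hypothesis: your interpolation step needs $\mathcal{U}_\text{c}$ convex (or at least path-connected), which the proposition never assumes and the paper's case analysis never uses. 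This assumption is not removable from your argument: with a disconnected $\mathcal{U}_\text{c}$ (say two isolated inputs achieving $\dot{h}_1=+1$ and $\dot{h}_1=-1$) each condition can be individually feasible while no single input lands in a narrow band around zero. Since the paper works throughout with box constraints \eqref{EqO26} this is mild, but you should state it explicitly. Finally, one imprecision you share with the paper: $h_{P1},h_{P2}>0$ is not guaranteed on all of $\textit{C}_1\cap\textit{C}_2$, because $\delta h_i\le 0$ can make $h_{Pi}$ negative inside $\textit{C}_i$; your inequalities $\beta_1<0<\beta_2$ (and the paper's own appeal to Theorem~\ref{ThrmO2}) are valid on $\textit{C}_{P1}\cap\textit{C}_{P2}$, so the band argument should be stated on that set --- which is also exactly where invariance is claimed.
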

\begin{proof}
For all $ \mathbf{x} \in \mathbb{R}^n $ such that $ \sup_{\mathbf{u} \in \mathcal{U}_\text{c}}\dot{h}_1(\mathbf{x},\mathbf{u}) \leq 0 $, it follows that
\begin{align}
\begin{split}
\underset{\mathbf{u}\in {{\mathcal{U}}_{\text{c}}}}{\mathop{\inf }}\,{{{\dot{h}}}_{2}}(\mathbf{x},\mathbf{u})& =\underset{\mathbf{u}\in {{\mathcal{U}}_{\text{c}}}}{\mathop{\inf }}\,\mathbf{c}_{2}^{\text{T}}(\mathbf{x})\mathbf{f}(\mathbf{x},\mathbf{u}) \\
& =\underset{\mathbf{u}\in {{\mathcal{U}}_{\text{c}}}}{\mathop{\inf }}\left\{-\frac{||{{\mathbf{c}}_{2}}||}{||{{\mathbf{c}}_{1}}||}\mathbf{c}_{1}^{\text{T}}(\mathbf{x})\mathbf{f}(\mathbf{x},\mathbf{u})\right\} \\ 
& =\frac{||{{\mathbf{c}}_{2}}||}{||{{\mathbf{c}}_{1}}||}\underset{\mathbf{u}\in {{\mathcal{U}}_{\text{c}}}}{\mathop{\inf }}\left\{-{{{\dot{h}}}_{1}}(\mathbf{x},\mathbf{u})\right\} \\
& =-\frac{||{{\mathbf{c}}_{2}}||}{||{{\mathbf{c}}_{1}}||}\underset{\mathbf{u}\in {{\mathcal{U}}_{\text{c}}}}{\mathop{\sup }}\,{{{\dot{h}}}_{1}}(\mathbf{x},\mathbf{u})\ge 0 \;.    
\end{split}
\label{EqO20}
\end{align}
Hence, for any admissible control $ \boldsymbol{\pi}_1(\mathbf{x}) \in \mathcal{U}_\text{c} $ for the PB-CBF $ h_{P1} $, we have $ \delta h_2 = 0 $; so, assuming that the system is inside the intersection of the safe sets $ \textit{C}_1 \cap \textit{C}_2 $, it is obvious that $ h_{P2}(\mathbf{x}) > 0 $ (if $ \mathbf{x} \in \textit{C}_2 - \textit{C}_1 $, then $ h_1 < 0 $ which violates the CBF condition on $ \textit{C}_1 $). Consequently,
\begin{align}
\begin{split}
{{\dot{h}}_{P2}}(\mathbf{x},{{\boldsymbol{\pi}}_{1}}(\mathbf{x}))& +{{\alpha }_{2}}[{{h}_{P2}}(\mathbf{x})] \\
& ={{\dot{h}}_{2}}(\mathbf{x},{{\boldsymbol{\pi}}_{1}}(\mathbf{x}))+{{\alpha }_{2}}[{{h}_{P2}}(\mathbf{x})]>0 \;.    
\end{split}
\label{EqO21}
\end{align}
Thus, the invariance condition for $ h_{P2}$ is automatically met for any such controller. We can show in a similar way that if $ \sup_{\mathbf{u} \in \mathcal{U}_\text{c}}\dot{h}_2(\mathbf{x},\mathbf{u}) \leq 0 $, it is sufficient for $ \mathbf{u} $ to be selected such that the $ \textit{C}_2 $ PB-CBF condition is satisfied, with the $ \textit{C}_1 $ condition being met automatically. Therefore, both PB-CBF conditions are simultaneously feasible on $ \textit{C}_1 \cap \textit{C}_2 $.
\end{proof}

The proof of Proposition~\ref{PropO1} implies that at any given time, based on the sign of $ \dot{h}_1 $ and $ \dot{h}_2 $, only one of the PB-CBFs is “active” since it is the direction of change for $ h $ that determines which safety barrier is at risk of being crossed. This in turn, means that at any time, only the condition corresponding to the active PB-CBF needs to be checked and met.

\subsection{Control Law Choice in the Prediction} \label{SecIII-A}

So far, the focus of this paper has been mostly limited to the feasibility of PB-CBFs in guaranteeing input-constrained set invariance through the addition of the prediction-based term; however, an essential question is yet to be answered: how should the control law in the prediction as seen in \eqref{EqO10} and \eqref{EqO11} should be selected?

As Definition~\ref{DefO3} outlines, the control law $ \mathbf{u}_0(\mathbf{x}) $ used in the prediction is arbitrary as long as it provides the guarantee that the integral in \eqref{EqO10} is always bounded in the safe set. But of course, a sloppy choice of $ \mathbf{u}_0 $ can result in the invariant subset becoming overly restricted or in the extreme case, altogether empty! So, getting a desirable performance from the PB-CBF is strongly dependent on a competent selection of $ \mathbf{u}_0 $. On the other hand, though it is the best option from the performance standpoint, finding the control law that is described by part (\textit{ii}) of Theorem~\ref{ThrmO2} is no trivial task. Generally speaking, this would entail solving a nonlinear optimal control problem to maximize $ \delta h $ over a finite time horizon. What is more, this problem would need to be solved again \textit{at every time step}, which would make it extremely computationally intensive for a real-time implementation or highly unscalable if approached with tabular methods.

Consequently, in many cases, it may be the best choice to settle for a suboptimal solution that comes reasonably close to the truly optimal law but is less computationally cumbersome. Furthermore, it may even be desirable not to go for the most optimal choice of $ \mathbf{u}_0 $ in order to make the predictions slightly more conservative, thus improving robustness (a topic that is not addressed in this work, but one that may be interesting to look into in the future). The rest of this section will therefore be dedicated to providing some suggestions for what $ \mathbf{u}_0 $ could be.

One probable choice instead of finding the $ \mathbf{u}_0(\tau) $ that minimizes $ \delta h $ over the interval $ \tau \in [t\;,\;T] $, could be to find $ \mathbf{u}_0 $ such that the rate of reduction of $ \dot{h} $ is instantaneously maximized (thus minimizing the length of the interval itself); that is:
\begin{equation}
{{\mathbf{u}}_{0}}(\tau )=\underset{\mathbf{u}\in {{\mathcal{U}}_{\text{c}}}}{\mathop{\arg \max }}\,\ddot{h}(\mathbf{x}(\tau ),\mathbf{u}) \;.
\label{EqO22}
\end{equation}
Where $ \ddot{h} $ is found by differentiating \eqref{EqO18} and substituting \eqref{EqO3}, as given by \eqref{EqO23}.
\begin{align}
\begin{split}
\ddot{h}(\mathbf{x},\mathbf{u})& ={{{\mathbf{\dot{x}}}}^{\text{T}}}{{\left( \frac{\partial \mathbf{c}}{\partial \mathbf{x}} \right)}^{\text{T}}}(\mathbf{f}+\mathbf{Gu})+{{\mathbf{c}}^{\text{T}}}\left( \frac{\partial \mathbf{f}}{\partial \mathbf{x}}\mathbf{\dot{x}}+\frac{\partial \mathbf{G}}{\partial \mathbf{x}}\mathbf{\dot{x}u} \right) \\ 
& ={{\mathbf{f}}^{\text{T}}}{{\left( \frac{\partial \mathbf{c}}{\partial \mathbf{x}} \right)}^{\text{T}}}\mathbf{f}+{{\mathbf{c}}^{\text{T}}}\frac{\partial \mathbf{f}}{\partial \mathbf{x}}\mathbf{f}+{{\mathbf{c}}^{\text{T}}}\frac{\partial \mathbf{f}}{\partial \mathbf{x}}\mathbf{Gu} \\ 
& +{{\mathbf{c}}^{\text{T}}}\left( \frac{\partial \mathbf{G}}{\partial \mathbf{x}}\mathbf{f} \right)\mathbf{u}+{{\mathbf{f}}^{\text{T}}}\left[ {{\left( \frac{\partial \mathbf{c}}{\partial \mathbf{x}} \right)}^{\text{T}}}+\left( \frac{\partial \mathbf{c}}{\partial \mathbf{x}} \right) \right]\mathbf{Gu} \\ 
& +{{\mathbf{c}}^{\text{T}}}\left( \frac{\partial \mathbf{G}}{\partial \mathbf{x}}\mathbf{Gu} \right)\mathbf{u}+{{\mathbf{u}}^{\text{T}}}{{\mathbf{G}}^{\text{T}}}{{\left( \frac{\partial \mathbf{c}}{\partial \mathbf{x}} \right)}^{\text{T}}}\mathbf{Gu}
\end{split}
\label{EqO23}
\end{align}
Subsequently, defining the mapping $ \mathbf{M}(\mathbf{x}) $ such that
\begin{equation}
{{\mathbf{c}}^{\text{T}}}\left( \frac{\partial \mathbf{G}}{\partial \mathbf{x}}\mathbf{Gu} \right)\mathbf{u}={{\mathbf{u}}^{\text{T}}}\mathbf{M}(\mathbf{x})\mathbf{u}
\label{EqO24}
\end{equation}
Allows for \eqref{EqO23} to be rewritten in the quadratic format.
\begin{align}
\begin{split}
&\ddot{h}(\mathbf{x},\mathbf{u})={{\mathbf{u}}^{\text{T}}}\left[ {{\mathbf{G}}^{\text{T}}}{{\left( \frac{\partial \mathbf{c}}{\partial \mathbf{x}} \right)}^{\text{T}}}\mathbf{G}+\mathbf{M}(\mathbf{x}) \right]\mathbf{u} \\
&+\left\{ {{\mathbf{f}}^{\text{T}}}\left[ {{\left( \frac{\partial \mathbf{c}}{\partial \mathbf{x}} \right)}^{\text{T}}}+\left( \frac{\partial \mathbf{c}}{\partial \mathbf{x}} \right) \right]\mathbf{G}+{{\mathbf{c}}^{\text{T}}}\frac{\partial \mathbf{f}}{\partial \mathbf{x}}\mathbf{G}+{{\mathbf{c}}^{\text{T}}}\left( \frac{\partial \mathbf{G}}{\partial \mathbf{x}}\mathbf{f} \right) \right\}\mathbf{u}
\end{split}
\label{EqO25}
\end{align}
Thus, if the constraints on $ \mathbf{u}_0 $ are all linear, i.e.,
\begin{equation}
{{\mathbf{u}}_{\text{min}}}\le \mathbf{u}\le {{\mathbf{u}}_{\text{max}}}\;,
\label{EqO26}
\end{equation}
then at each step of the prediction process, \eqref{EqO22} may be solved using quadratic programming according to \eqref{EqO25} to find $ \mathbf{u}_0 $.

If the system’s controls behave in a sufficiently linear-like fashion (i.e., controls are monotonic with no reversals), the process of solving \eqref{EqO22} can be further simplified through linear approximation.
\begin{equation}
\mathbf{f}(\mathbf{x})\approx {{\left. \frac{\partial \mathbf{f}}{\partial \mathbf{x}} \right|}_{{{\mathbf{x}}_{0}}}}(\mathbf{x}-{{\mathbf{x}}_{0}})=\mathbf{A}(\mathbf{x}-{{\mathbf{x}}_{0}})
\label{EqO27}
\end{equation}
\begin{equation}
\mathbf{B}\approx \mathbf{G}({{\mathbf{x}}_{0}})
\label{EqO28}
\end{equation}
\begin{equation}
{{\mathbf{c}}^{\text{T}}}\approx {{\mathbf{c}}^{\text{T}}}({{\mathbf{x}}_{0}})
\label{EqO29}
\end{equation}
Where $ \mathbf{x}_0 $ is some state representative of the general trend of the controls’ behavior (note that this state does not have to be fixed). Substituting \eqref{EqO27}, \eqref{EqO28} and \eqref{EqO29} into \eqref{EqO23} eliminates its dependency on u (as can be seen in \eqref{EqO30}).
\begin{equation}
\frac{\partial \ddot{h}}{\partial \mathbf{x}}(\mathbf{x})={{\mathbf{c}}^{\text{T}}}\mathbf{AB}
\label{EqO30}
\end{equation}
Thus, under the linear-like controls assumption, the solution to \eqref{EqO25} corresponds to a bang-bang law \cite{O22}.
\begin{equation}
{{u}_{i}}=\left\{\begin{array}{*{35}{l}}
{{u}_{i,\max }} & (\pm \text{sgn} [{{\mathbf{c}}^{\text{T}}}{{\mathbf{B}}_{i}}]>0)  \\
{{u}_{i,0}} & (\pm \text{sgn} [{{\mathbf{c}}^{\text{T}}}{{\mathbf{B}}_{i}}]=0)  \\
{{u}_{i,\min }} & (\pm \text{sgn} [{{\mathbf{c}}^{\text{T}}}{{\mathbf{B}}_{i}}]<0)  \\
\end{array} \right.
\label{EqO31}
\end{equation}

Note that even if the controls are not monotonic, but reversals do not occur, the near-optimal solution given by \eqref{EqO31} would still serve as a suitable, computationally inexpensive alternative to solving \eqref{EqO22} subject to \eqref{EqO25} resulting in a slightly more conservative $ \delta h $.

\subsection{Implementation} \label{SecIII-B}
In this section, the results from previous sections shall be compiled to provide the reader with a procedure to implement the PB-CBFs in a control problem. Let us consider the affine system of \eqref{EqO3} with the input constraints $ \mathbf{u}_{\min} \leq \mathbf{u} \leq \mathbf{u}_{\max} $ and the CBF $ h(\mathbf{x}) $. Assuming that the input $ \mathbf{u} $ is given by some arbitrary Lipschitz continuous controller, the PB-CBF must make minimal admissible modifications to $ \mathbf{u} $ in order to maintain the system inside the safe region \eqref{EqO12}. To this end, the PB-CBF can be integrated into the control scheme as shown in Fig.~\ref{FigO1}. The PB-CBF in this diagram, as shown in Fig.~\ref{FigO2}, is itself comprised of two parts: the portion that estimates the stopping margin $ \delta h $ using model-based prediction/propagation, and the portion that makes the feasible modification on $ \mathbf{u} $ based on the CBF that incorporates $ \delta h $.
\begin{figure}[h]
\centerline{\includegraphics[width=\columnwidth]{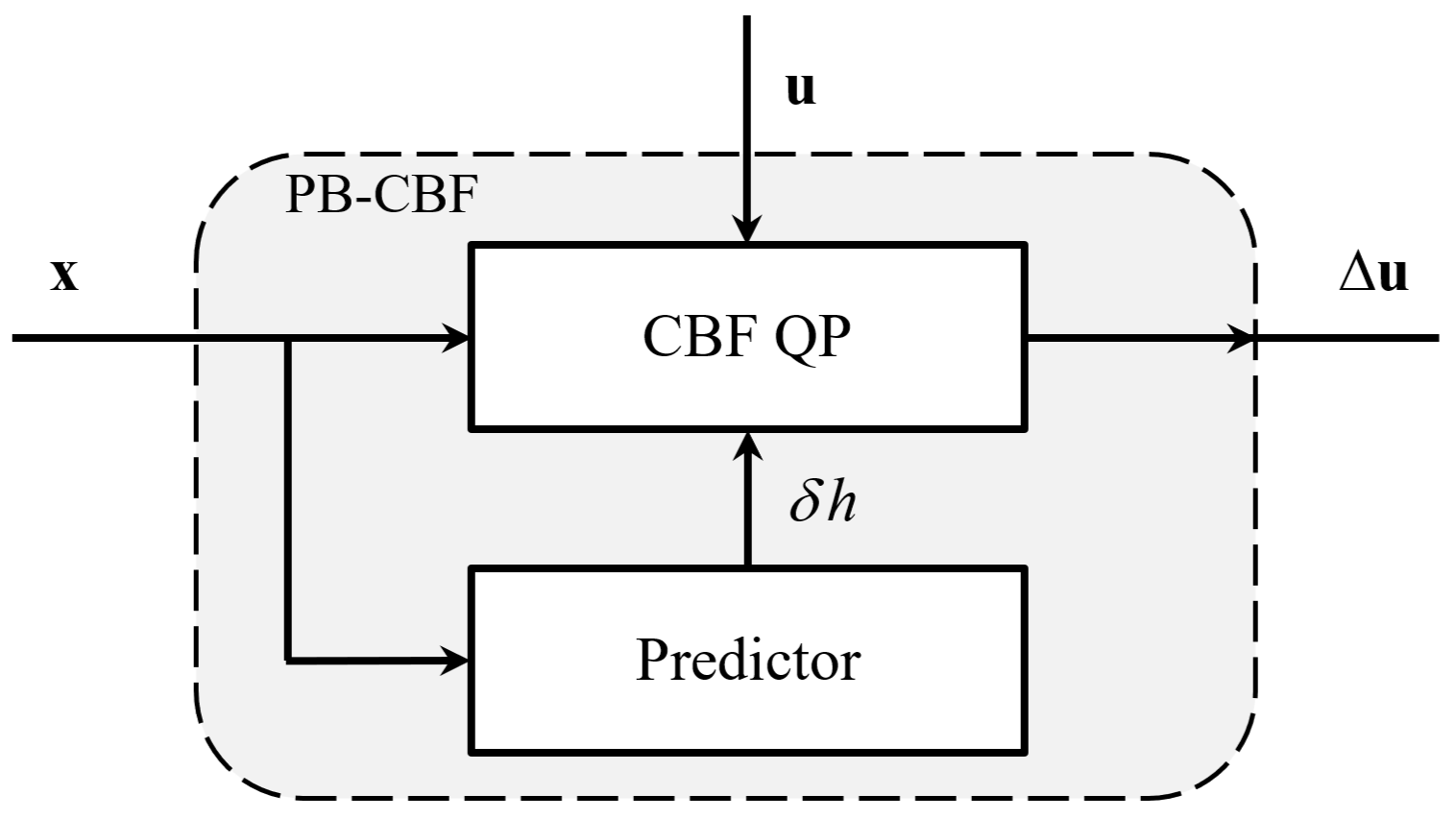}}
\caption{Modules of the prediction-based control barrier function.}
\label{FigO2}
\end{figure}
The scheme presented in Fig.~\ref{FigO2} may be implemented based on the following pseudocode.
\begin{algorithm}[H]
\caption{Prediction-Based Control Barrier Function.}\label{alg:alg1}
\begin{algorithmic}
\STATE
\STATE Input integration time step $ \Delta t $
\STATE \textbf{do} at every time step:
\STATE \hspace{0.5cm}\textbf{if} $ \dot{h}(\mathbf{x},{{\mathbf{u}}_{0}})<0 $ then
\STATE \hspace{0.75cm}Propagate \eqref{EqO3} from the present state with $ \mathbf{u} = \mathbf{u}_0(\mathbf{x}) $ until $ T $ when $ \dot{h}(\mathbf{x},{{\mathbf{u}}_{0}})\ge 0 $ to find $ \mathbf{x}(\tau) $ for $ \tau \in [t\;,\;T] $
\STATE \hspace{0.75cm}Calculate $ \delta h $ using \eqref{EqO10}
\STATE \hspace{0.5cm}\textbf{else}
\STATE \hspace{0.75cm}Set $ \delta h = 0 $
\STATE \hspace{0.5cm}\textbf{end if}
\STATE \hspace{0.5cm}Solve
\STATE \hspace{0.75cm} \begin{align*}
  & \Delta \mathbf{u}=\underset{\Delta \mathbf{u}\in {{\mathbb{R}}^{m}}}{\mathop{\arg \min \frac{1}{2}\Delta {{\mathbf{u}}^{\text{T}}}\mathbf{H}\Delta \mathbf{u}}}\, \\ 
 & \begin{array}{*{35}{l}}
   \text{s.t.} & \begin{split}
   -{{\mathbf{c}}^{\text{T}}}\mathbf{G}(\mathbf{x})\Delta \mathbf{u}&<{{\mathbf{c}}^{\text{T}}}\mathbf{G}(\mathbf{x})(\mathbf{u}-{{\mathbf{u}}_{0}}) \\
   &+\alpha \left[ h(\mathbf{x})+\delta h(\mathbf{x}) \right]
   \end{split} \\
   {} & \Delta \mathbf{u}<{{\mathbf{u}}_{\max }}-\mathbf{u}  \\
   {} & -\Delta \mathbf{u}<\mathbf{u}-{{\mathbf{u}}_{\min }}  \\
\end{array}
\end{align*}
\STATE \textbf{end do}
\end{algorithmic}
\label{AlgO1}
\end{algorithm}

As discussed, $ \mathbf{u}_0(\mathbf{x}) $ may be found either by solving \eqref{EqO22} or by using \eqref{EqO31} if the required conditions are met. It should be noted that, since the estimated stopping margin $ \delta h $ is numerically evaluated, a sufficiently small timestep $ \Delta t $ must be selected to prevent the upper bound of $ \delta h $ from exceeding the value of its theoretically possible maximum $ \delta h^\ast $ (based on the optimal solution). Otherwise, running into feasibility problems at some (albeit, fewer than usual) values of $ \mathbf{x} $ is possible. As such, for the cases where computational limitations are prominent, an interesting direction for future work may be to look into numerical methods in the propagation and integration process that would allow for the upper bound of $ \delta h $ to always be smaller than that of $ \delta h^\ast $, regardless of the value of $ \Delta t $.

\section{Numerical Examples} \label{SecIV}

So far, the reader has been provided with a framework for how PB-CBFs may be implemented. In order to both illustrate the necessity of the prediction-based term and further clarify the process of building PB-CBFs from existing CBFs, this section shall be dedicated to implementing PB-CBFs in some numerical examples.

\subsection{Double Integrator} \label{SecIV-A}

Before exploring a more complicated example, it may be useful to implement the PB-CBF scheme in a simple problem to hopefully gain some intuition into its functionality. Consider a 2D double integrator system $ \dot{\mathbf{r}} = \mathbf{v} $ and $ \dot{\mathbf{v}} = \mathbf{u} $, where $ \mathbf{r} $ is the position, $ \mathbf{v} $ is the velocity, and $ \mathbf{u} $ is the input acceleration. The following input constraint is imposed on the system.
\begin{equation}
\| u \|_2 \leq u_{\max}
\label{EqO32}
\end{equation}
Furthermore, the system is required to be kept outside the circle $ \left\{ r \in \mathbb{R}^2 : \|r\|_2 \leq R \right\} $ where $ R $ is the radius of the circle.

Given the geometry of the safe set’s boundaries, a polar coordinates representation of the system equations as given by \eqref{EqO33} is the best choice for the CBF calculations.
\begin{equation}
\left\{ \begin{array}{*{35}{l}}
\dot{r}={{v}_{r}}  \\
\dot{\theta }={{\omega }_{\theta }}  \\
\ddot{r}={{u}_{r}}+r{{\omega }_{\theta }}^{2}  \\
\ddot{\theta }=\frac{{{u}_{\theta }}}{r}-\frac{2{{v}_{r}}{{\omega }_{\theta }}}{r}  \\
\end{array} \right.
\label{EqO33}
\end{equation}
Defining the state, and input vectors $ x = [r\;\theta\;\dot{r} = v_r\;\dot{\theta} = \omega_\theta]^\text{T} $ and $ \mathbf{u} = [u_r  u_\theta]^\text{T} $, the safe set for this system will be given by $ \textit{C}_0 := \left\{\mathbf{x} \in \mathbb{R}^4 : r > R^2\right\} $. Ignoring the input constraint for now, using the backstepping method introduced by \cite{O10}, the following CBF can be defined to render the unconstrained system invariant inside $ \textit{C} \subset \textit{C}_0 $.
\begin{equation}
h(\mathbf{x})=r-R-\frac{1}{2\mu }{{\dot{r}}^{2}}
\label{EqO34}
\end{equation}

The CBF \eqref{EqO34} would produce admissible results for all $ \mu \in \mathbb{R}^+ $ if no input constraints were present; however, as shall be shown shortly, this will not be the case when \eqref{EqO32} is imposed. The next step is to build the PB-CBF from \eqref{EqO34} and a reference control law $ \mathbf{u}_0(\mathbf{x}) $. Given the nature of the input constraint, a prudent choice for $ \mathbf{u}_0 $ may be the radial version of \eqref{EqO31},
\begin{equation}
{{\mathbf{u}}_{0}}(\mathbf{x})={{u}_{\max }}\frac{{{\mathbf{c}}^{\text{T}}}\mathbf{B}}{||{{\mathbf{c}}^{\text{T}}}\mathbf{B}||}
\label{EqO35}
\end{equation}
where $ \mathbf{c}^\text{T} $ and $ \mathbf{B} $ are calculated using \eqref{EqO28} and \eqref{EqO29} respectively (with $ \mathbf{x}_0 = \mathbf{x} $).
\begin{equation}
\begin{matrix}
{{\mathbf{B}}^{\text{T}}}=\left[ \begin{array}{*{35}{l}}
0 & 0 & 1 & 0  \\
0 & 0 & 0 & {1}/{r}  \\
\end{array} \right] & , & {{\mathbf{c}}^{\text{T}}}=[\begin{matrix}
1 & 0 & {-\dot{r}}/{\mu } & 0  \\
\end{matrix}]  \\
\end{matrix}
\label{EqO36}
\end{equation}
The control law \eqref{EqO35} is thus simplified to the expression \eqref{EqO37}.
\begin{equation}
{{\mathbf{u}}_{0}}(\mathbf{x})=-{{u}_{\max }}{{[\begin{matrix}
1 & 0  \\
\end{matrix}]}^{\text{T}}}\text{sgn} (\dot{r})
\label{EqO37}
\end{equation}
Hence, using \eqref{EqO10}, \eqref{EqO33} and \eqref{EqO37}, $ \delta h $ could be calculated based on the current state to construct the PB-CBF. However, before proceeding to the simulation results it may be useful to further look into how $ \delta h $ would turn out based on different values of $ u_{\max} $ and $ \mu $. So, consider the time derivative of $ h $:
\begin{equation}
\dot{h}(\mathbf{x},\mathbf{u})={{\mu }^{-1}}\dot{r}(\mu -{{u}_{r}}-r{{\dot{\theta }}^{2}})
\label{EqO38}
\end{equation}
Based on \eqref{EqO38}, there are two possible scenarios. If $ u_{\max} < \mu - r\dot{\theta}^2 $, then the sign of $ \dot{h} $ would be the same as $ \dot{r} $ regardless of what $ \mathbf{u} $ is chosen to be. That is to say, $ \dot{h} $ cannot be instantaneously brought to zero solely through the right choice of $ \mathbf{u} $; and so, for all $ \dot{r} < 0 $ it takes a nonzero amount of time to bring $ \dot{h} $ to zero, resulting $ \delta h $ being negative for all $ \dot{r} < 0 $, implying that $ \textit{C}_\textit{P} \subset \textit{C} $. Furthermore, under $ \mathbf{u}_0 $, $ \dot{h} $ is guaranteed to vanish in finite time since,
\begin{equation}
\ddot{h}(\mathbf{x},{{\mathbf{u}}_{0}})={{\mu }^{-1}}({{u}_{\max }}+r{{\dot{\theta }}^{2}})(\mu -{{u}_{\max }}-r{{\dot{\theta }}^{2}})+{{\dot{r}}^{2}}{{\dot{\theta }}^{2}}>0
\label{EqO39}
\end{equation}
implies that it is strictly increasing. The other case where $ \dot{r} $ is positive and $ \dot{\theta} $ is large, shall not be discussed, but a similar reasoning could be followed to make the same conclusions. On the other hand, if $ u_{\max} \geq \mu - r\dot{\theta}^2 $ (and $ r\dot{\theta}^2 \leq \mu + u_{\max} $), then the expression \eqref{EqO38}, can be made to vanish (or be positive) instantaneously, which is what the selected $ \mathbf{u}_0 $ does, resulting in $ \delta h = 0 $.

Due to the presence of the centripetal term in \eqref{EqO38}, there is no value of $ \mu $ that can be definitively said to always make $ h $ feasible for a given $ u_{\max} $. However, in order for the CBF \eqref{EqO34} to be as unrestrictive as possible, the best choice is to select $ \mu = u_{\max} $. This, at least, ensures that if $ \dot{\theta} $ is sufficiently small, the CBF will always be feasible. In many problems, however, even this vague conclusion may be hard to come across. In other words, it would almost certainly be far more difficult than finding a control law that brings $ \dot{h} $ to zero. Nevertheless, the PB-CBF constructed from a CBF with $ \mu > u_{\max} $ would still be less restrictive! Consider Fig.~\ref{FigN3} which shows a $ \theta = 0 $ and $ \omega_\theta = 0 $ cross-section of the sets $ \textit{C}_0 $, \textit{C} and $ \textit{C}_\textit{P} $ in the $ r\text{–}\dot{r} $ plane. As can be seen, $ \textit{C}_\textit{P} $ is only missing a section from the bottom side of the \textit{C} where the values of $ \dot{r} $ are negative. This is in contrast to sections being removed from both the bottom and top (unnecessarily so for the latter) if the other option was taken. Therefore, another advantage of the PB-CBF is the fact that it is asymmetric, which allows it to be less restrictive.
\begin{figure}[h]
\centerline{\includegraphics[width=0.8\columnwidth]{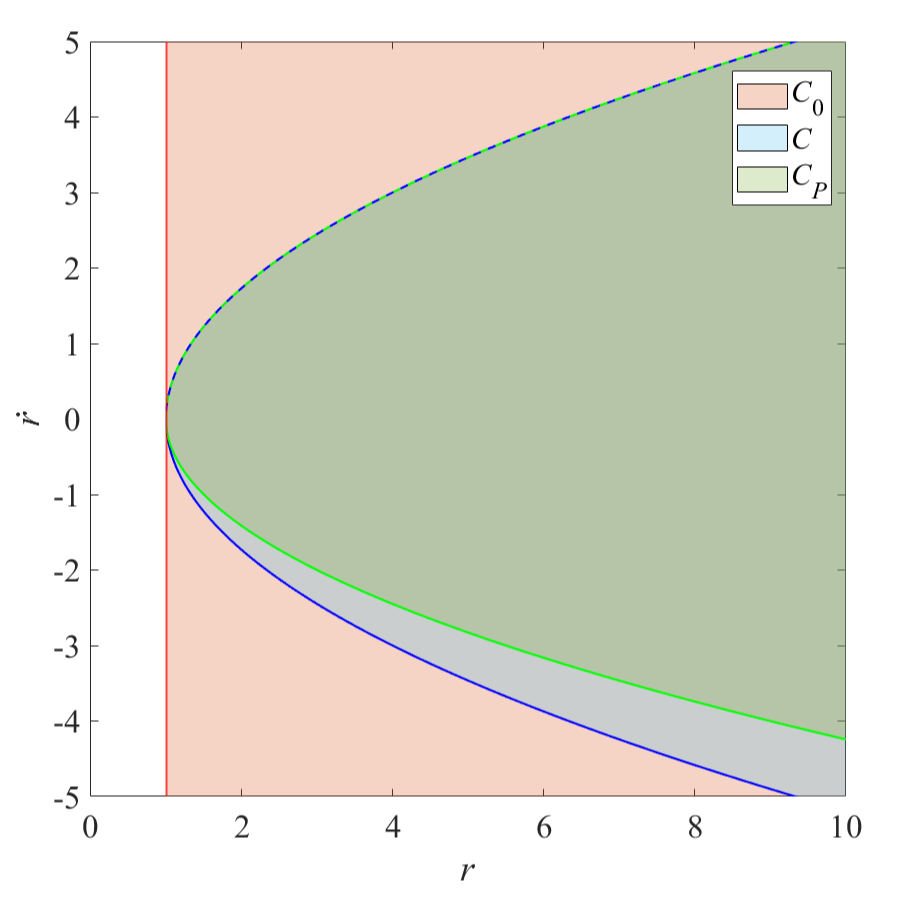}}
\caption{Graph of the safe set, the CBF super-level set, and the PB-CBF super-level set in the $ r\text{–}\dot{r} $ phase plane for $ \mu = 1.5 $ and $ u_{\max} = 1 $.}
\label{FigN3}
\end{figure}

Similar to the parameter $ \mu $ in the CBF \eqref{EqO34}, an appropriate selection of the class $ \text{K}_\infty$ function $ \alpha $ in \eqref{EqO4} may still result in solutions to \eqref{EqO5} that are admissible with the given input constraints. This in general also faces the same sort of issues as seen with the selection of the CBF itself, in that not only may it be difficult to find the appropriate class $ \text{K}_\infty$ function to begin with, but also, the corresponding results may be overly conservative. In order to illustrate these claims, the simulation results of the system with the presented CBF and the respective PB-CBF shall be examined.

The system with $ u_{\max} = 1 $ is initiated from $ \mathbf{r}(0) = [-2\; 1]^\text{T} $ and $ \dot{\mathbf{r}}(0) = [1\;0.25]^\text{T} $ for which it is desired to follow the reference path $ \mathbf{r}_r(t) = [\sin(t\pi/5)\;t\pi/10]^\text{T} $ (both vectors are represented in the Cartesian coordinates $ \mathbf{r} = [x\;y]^\text{T} = r[\cos\theta\; \sin\theta]T $) while always remaining outside of the unsafe region. This was achieved by designing a state feedback controller to track the reference, and then integrating the CBFs into the scheme. Furthermore, if the CBF is unable to maintain the system outside of the unsafe region, a secondary mechanism is designed to push it back into the safe set. Choosing $ \mu = 1.5 $ and $ \alpha (z) = \gamma z $ where $ \gamma > 0 $, Fig.~\ref{FigN4} and Fig.~\ref{FigN4} show the simulation results for $ \gamma = 2 $ and $ \gamma = 10 $ respectively.

Based on the results, two observations can be made. In the case where $ \gamma = 10 $, while the PB-CBF is able to keep the system from crossing the safe boundary, the base (prediction-less) CBF is unable to do this. The effect of the prediction term allows the PBCBF to “react” slightly (about 35 ms) sooner resulting in the system “turning” away from the safe boundary earlier and therefore not crossing it. This is of course consistent with the previously obtained theoretical results that claimed the system can be feasibly made invariant with the PB-CBF using any class $ \text{K}_\infty$ function. Another interesting observation is also made from the case where $ \gamma = 2 $. In this case, the base CBF is also able to keep the system from crossing the boundary (though this may not be true for all possible initial conditions); however, in contrast to the other case where the PB-CBF was activated sooner than the base CBF, here it can be seen that the PB-CBF is activated slightly (about 46 ms) later. This indicates that the prediction-based term has caused the calculated margin of safety to be expanded resulting in the safety filter needing to intervene later thus giving the main controller more freedom. So to summarize, the PB-CBF construction not only allows for a CBF to operate feasibly in an input-constrained system but also reduces the base CBF’s conservativeness when it is not necessary.
\begin{figure}[h]
\centering
\centerline{\includegraphics[width=\columnwidth]{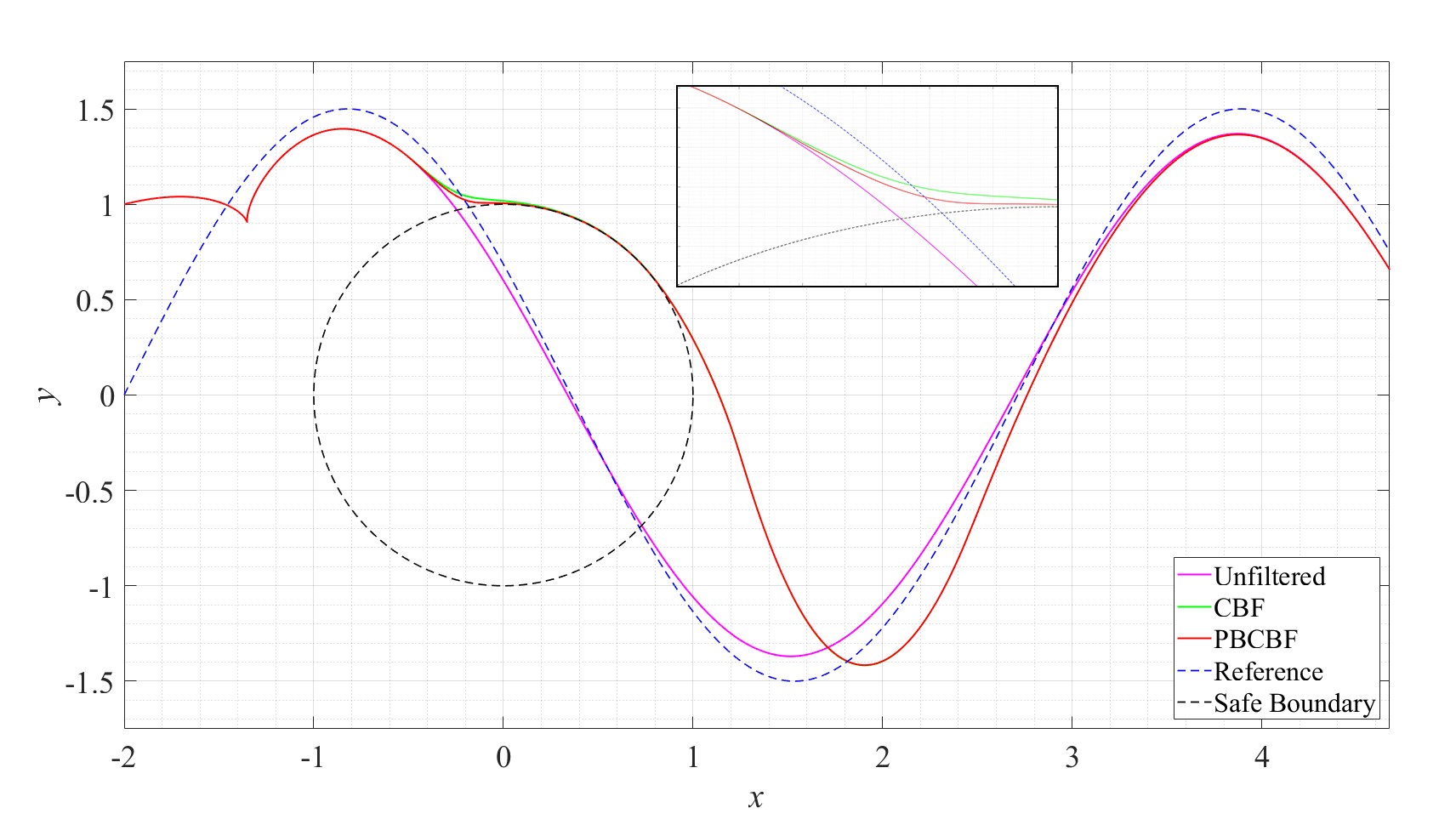}}
(a)
\centerline{\includegraphics[width=\columnwidth]{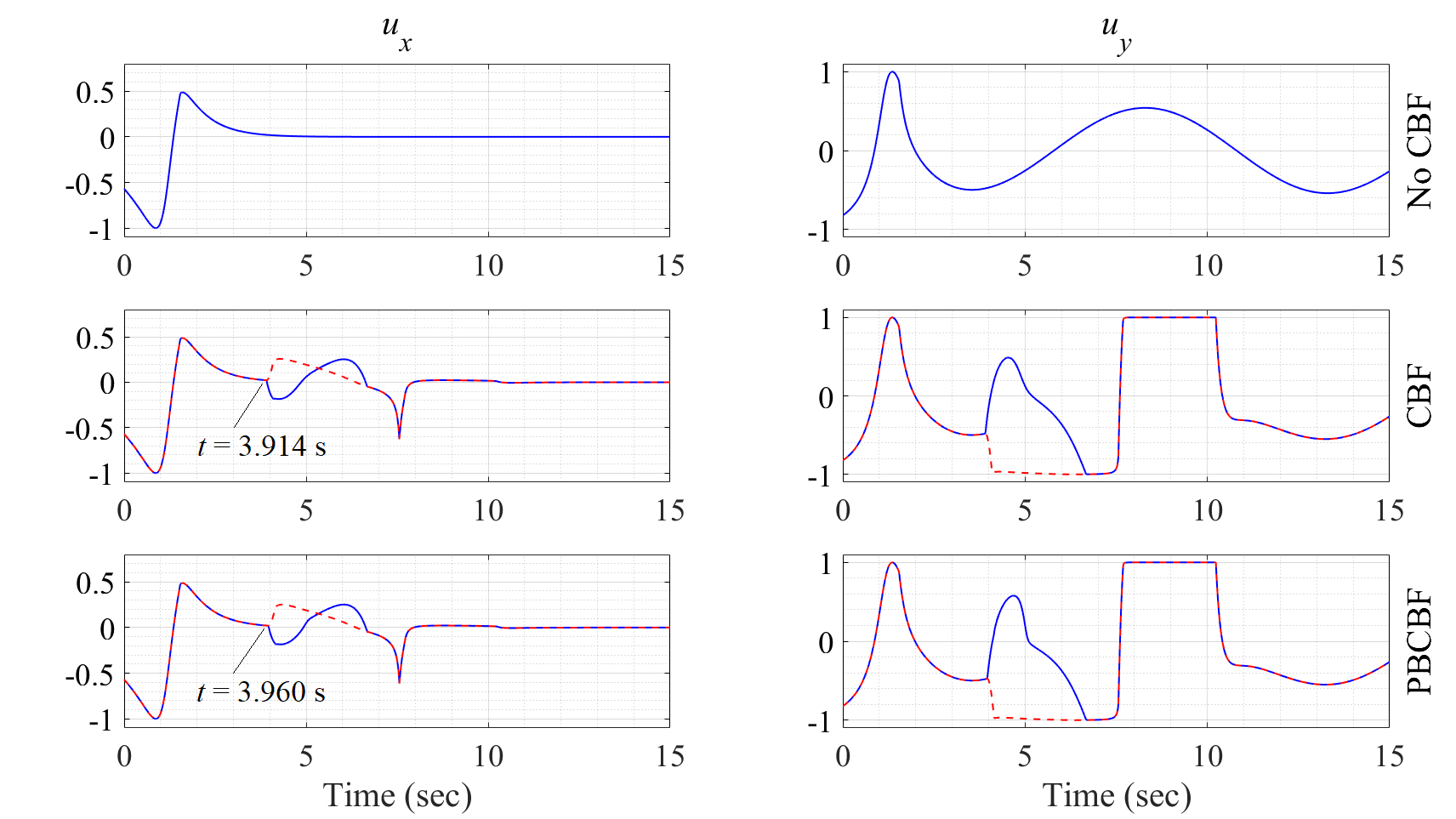}}
(b)
\caption{Simulation results for the double integrator system: (a) the path, (b) the inputs for scenarios with (\textit{i}) no CBFs, (\textit{ii}) the base CBF, and (\textit{iii}) the PB-CBF ($ \gamma = 2 $). The pre-modified control signals are displayed in dashed red.}
\label{FigN4}
\end{figure}
\begin{figure}[h]
\centering
\centerline{\includegraphics[width=\columnwidth]{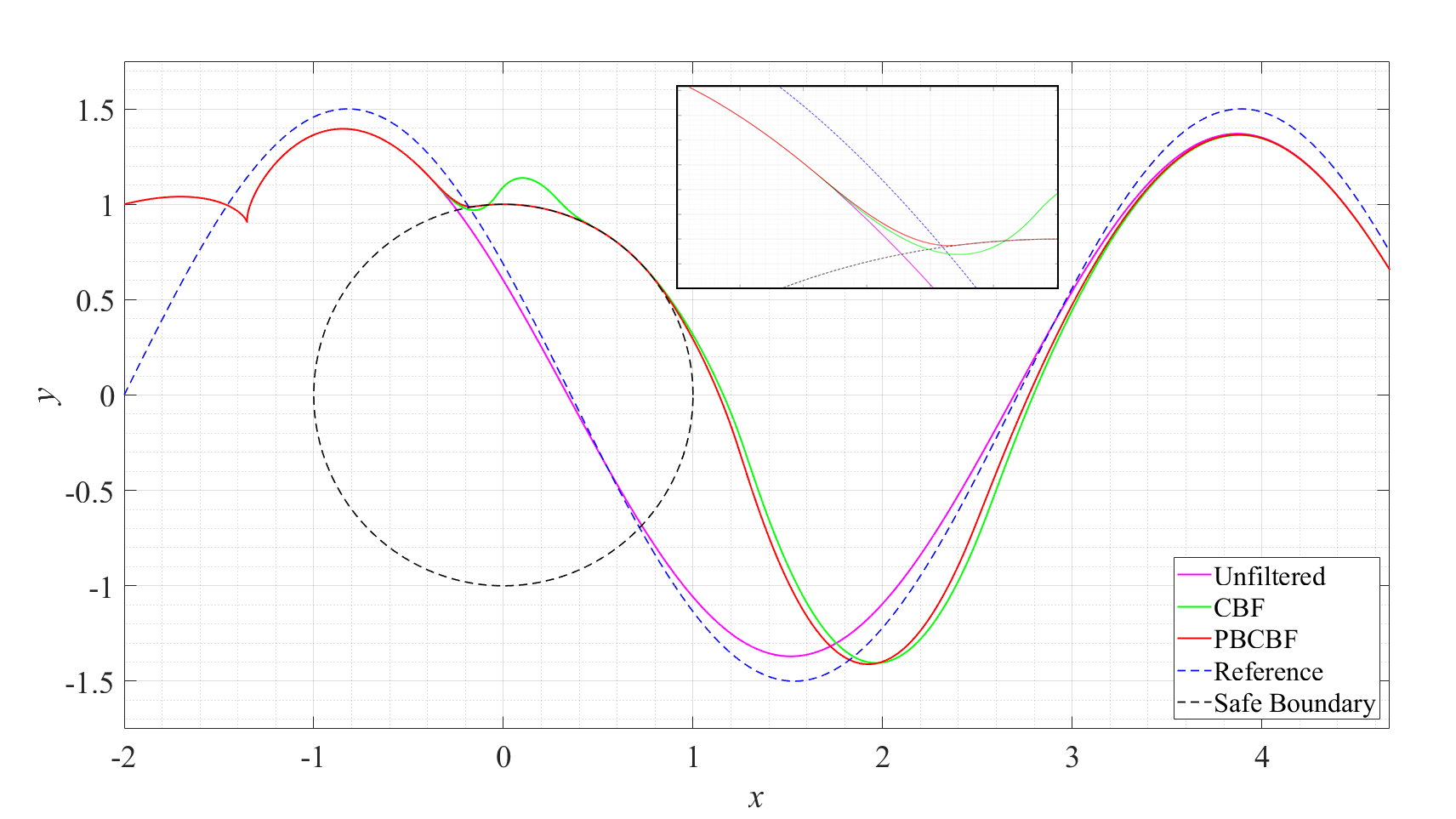}}
(a)
\centerline{\includegraphics[width=\columnwidth]{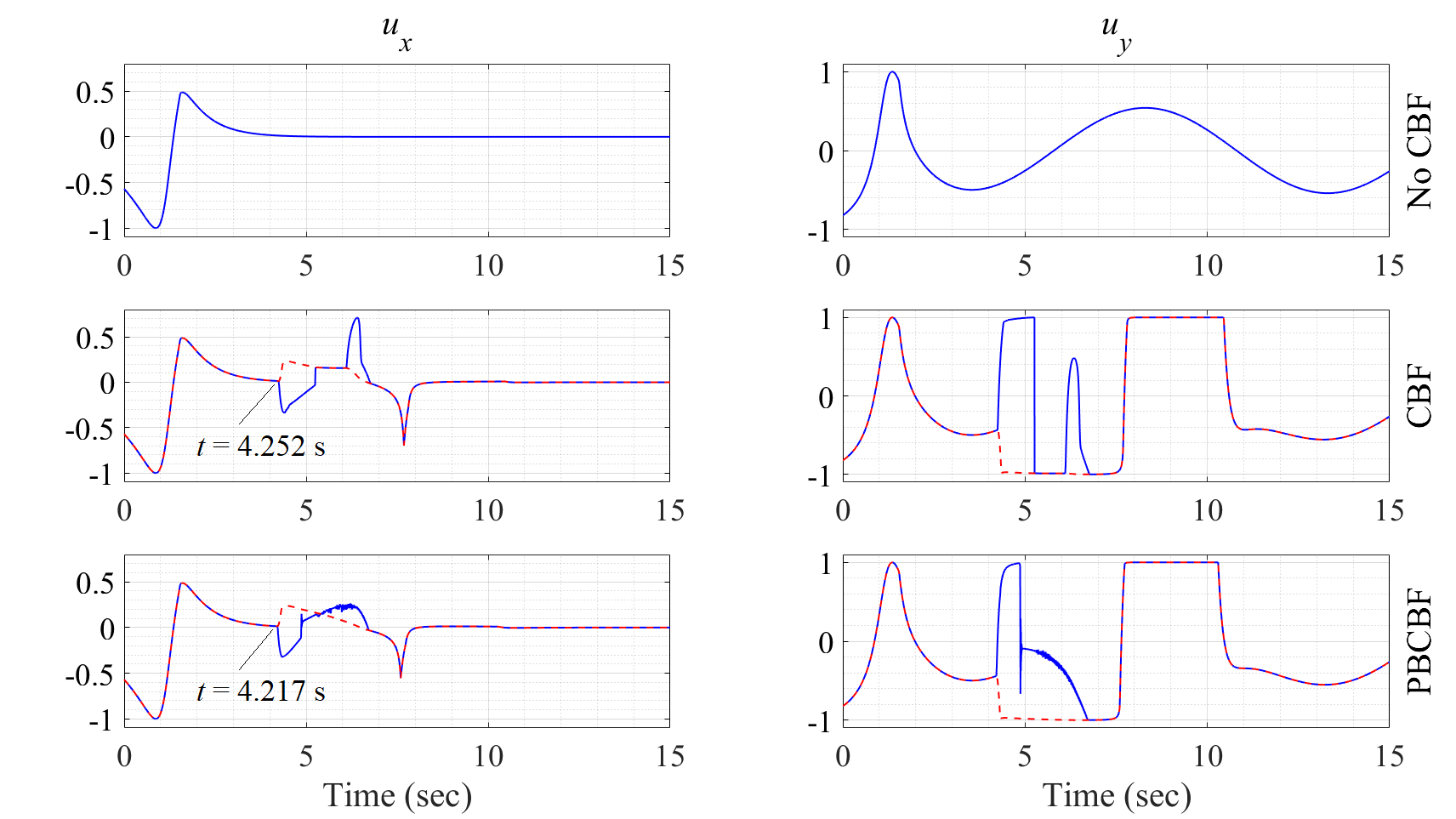}}
(b)
\caption{Simulation results for the double integrator system: (a) the path, (b) the inputs for scenarios with (\textit{i}) no CBFs, (\textit{ii}) the base CBF, and (\textit{iii}) the PB-CBF ($ \gamma = 10 $). The pre-modified control signals are displayed in dashed red.}
\label{FigN5}
\end{figure}

The first example was meant (through its simplicity) to highlight some of the advantages of the PB-CBF construction. The next example which is a bit more complicated, demonstrates the implementation of the results of the present study in a more practical engineering problem in a fairly straightforward manner.

\subsection{Airplane Stall Prevention} \label{SecIV-B}

When an airplane enters into the stall regime, a severe loss of controllability along with highly unpredictable nonlinear flight performance is usually encountered. As such, preventing stalls is a matter of utmost criticality for most transport aircraft. To this end, the aircraft’s angle of attack (AoA) with respect to the incoming airflow must be closely moderated. If AoA exceeds a specified linear range that is often dependent on the airplane’s airfoil and wing design, the airflow around the wing and other lift-producing control surfaces could detach, resulting in a sudden loss of lift forces that usually lead to stalling. Consequently, the airplane’s AoA falls neatly into the CBF framework as a safety-critical parameter.

\subsubsection{Flight Dynamics Model} 
In accordance with the notation defined in \cite{O23}, the dynamics equations of a rigid-body aircraft are:
\begin{equation}
\sum{\boldsymbol{f}_{i}^{B}}={{m}^{B}}{{D}^{B}}\boldsymbol{v}_{B}^{I}+{{m}^{B}}{{\mathbf{\Omega}}^{BI}}\boldsymbol{v}_{B}^{I}
\label{EqO40}
\end{equation}
\begin{equation}
\sum{\boldsymbol{m}_{Bi}^{B}}=\mathbf{I}_{B}^{B}{{D}^{B}}{{\boldsymbol{\omega}}^{BI}}+{{\mathbf{\Omega }}^{BI}}\mathbf{I}_{B}^{B}{{\boldsymbol{\omega}}^{BI}}
\label{EqO41}
\end{equation}
where $ \boldsymbol{f}_{i}^{B} $  and $ \boldsymbol{m}_{Bi}^{B} $  represent the forces and moments (about the center of mass) acting on the aircraft respectively, $ m^B $ is its mass, $ \mathbf{I}_{B}^{B} $  its moment of inertia about the center of mass, $ \boldsymbol{v}_{B}^{I} $  and $ \boldsymbol{\omega}^{BI} $ are its linear and angular velocity with respect to the inertial frame, $ D^B(\:\cdot\:) $ is the rotational time derivative, and $ [\:\cdot\:]^B $ will be defined as the representation of a vector in the body coordinate system $ \mathit{B} $. Subsequently, representing these equations in the aircraft’s body coordinate system and simplifying them for three degrees of freedom gives:
\begin{equation}
\begin{matrix}
\left[ \begin{matrix}
{\dot{U}}  \\
{\dot{W}}  \\
\end{matrix} \right]={{m}^{-1}}\sum{{{[f_{i}^{B}]}^{B}}}-\left[ \begin{matrix}
QW  \\
-QU  \\
\end{matrix} \right]  \\
\dot{Q}={{I}_{y}}^{-1}\sum{m_{Bi}^{B}}  \\
\dot{\theta }=Q  \\
\end{matrix}
\label{EqO42}
\end{equation}
Here, $ U $ and $ W $ are the aircraft's velocity along its $ x $ and $ z $ axis respectively, $ Q $ is its angular speed about the $ y $ axis, $ \theta $ is the aircraft pitch angle with respect to the horizon, and $ I_y $ is its moment of inertia with respect to the body $ y $ axis. Furthermore, the aircraft’s angle of attack $ \alpha $ (not to be mistaken with the class $ \text{K}_\infty$ functions for which the same symbol was used), is given by \eqref{EqO43}.
\begin{equation}
\tan \alpha =\frac{W}{U}
\label{EqO43}
\end{equation}
Additionally, \eqref{EqO43} can be differentiated with respect to time to give \eqref{EqO44} ($ \mathbf{v} = [U\;W]^\text{T} $).
\eqref{EqO43}.
\begin{equation}
\dot{\alpha }=\frac{\dot{W}U-W\dot{U}}{{{U}^{2}}+{{W}^{2}}}=\frac{[\begin{matrix}
-W & U  \\
\end{matrix}]}{{{U}^{2}}+{{W}^{2}}}\left[ \begin{matrix}
{\dot{U}}  \\
{\dot{W}}  \\
\end{matrix} \right]={{\mathbf{{c}'}}^{\text{T}}}\dot{\mathbf{v}}
\label{EqO44}
\end{equation}

The force and moment terms seen in \eqref{EqO42} can also be expanded into the following form:
\begin{equation}
\sum{{{[f_{i}^{B}]}^{B}}}={{[{{f}_{a}}]}^{B}}+{{[{{f}_{t}}]}^{B}}+m{{[g]}^{B}}
\label{EqO45}
\end{equation}
\begin{equation}
\sum{m_{Bi}^{B}}={{m}_{a}}
\label{EqO46}
\end{equation}
Defining $ \boldsymbol{\zeta}  = [a\;\bar{\text{M}}\;Q]^\text{T} $, where $ \bar{\text{M}} =  \|\mathbf{v}\|_2/a $  represents the Mach number and $ a $ the speed of sound, the aerodynamic force $ [f_a]^B $ and moment $ m_a $ are given by:
\begin{equation}
{{[{{f}_{a}}]}^{B}}=-\bar{q}S{{[T]}^{BS}}({\boldsymbol{c}_{F0}}+{\boldsymbol{C}_{Fi}}\zeta +{\boldsymbol{c}_{F\dot{\alpha}}}{{\mathbf{{c}'}}^{\text{T}}}\mathbf{\dot{v}}+{\boldsymbol{c}_{F\delta E}}{{\delta }_{E}})
\label{EqO47}
\end{equation}
\begin{equation}
{{m}_{a}}=\bar{q}Sc[{{C}_{m0}}+\boldsymbol{c}_{mi}^{\text{T}}\zeta +{{C}_{m\delta E}}{{\delta }_{E}}+{{C}_{m\dot{\alpha}}}{{\mathbf{{c}'}}^{\text{T}}}\mathbf{\dot{v}}] \;.
\label{EqO48}
\end{equation}
Here, $ \delta_E \in [\delta_{E\min},\delta_{E\max}] $ is the elevator angle input, and $ \boldsymbol{c}_{F0} $, $ \boldsymbol{C}_{Fi} $, $ \boldsymbol{c}_{F\dot{\alpha}} $, $ \boldsymbol{c}_{F\delta E} $ and $ \boldsymbol{c}_{mi}^{\text{T}} $  are matrices containing the aerodynamic coefficients $ C_i $ (these coefficients along with their corresponding formulas can be found in \cite{O24} and \cite{O25}):
\begin{equation}
\begin{matrix}
\begin{matrix}
{\boldsymbol{c}_{F0}}={{[\begin{matrix}
{{C}_{L0}} & {{C}_{D0}}  \\
\end{matrix}]}^{\text{T}}} & , & {\boldsymbol{C}_{Fi}}=\left[ \begin{array}{*{35}{l}}
{{C}_{L\alpha }} & {{C}_{L\bar{\text{M}}}} & {{C}_{Lq}}  \\
{{C}_{D\alpha }} & {{C}_{D\bar{\text{M}}}} & {{C}_{Dq}}  \\
\end{array} \right]  \\
\end{matrix}  \\
\begin{matrix}
{\boldsymbol{c}_{F\dot{\alpha }}}={{[\begin{matrix}
{{C}_{L\dot{\alpha }}} & {{C}_{D\dot{\alpha }}}  \\
\end{matrix}]}^{\text{T}}} & , & {\boldsymbol{c}_{F\delta E}}={{[\begin{matrix}
{{C}_{L\delta E}} & {{C}_{D\delta E}}  \\
\end{matrix}]}^{\text{T}}}  \\
\end{matrix}  \\
\boldsymbol{c}_{mi}^{\text{T}}=[\begin{matrix}
{{C}_{m\alpha }} & {{C}_{m\bar{\text{M}}}} & {{C}_{mq}}  \\
\end{matrix}]  \\
\end{matrix}
\label{EqO49}
\end{equation}
and $ [T]^{BS} $ is the transformation matrix from the stability to the body coordinate system.
\begin{equation}
{{[T]}^{BS}}=\left[ \begin{matrix}
\cos \alpha  & -\sin \alpha   \\
\sin \alpha  & \cos \alpha   \\
\end{matrix} \right]
\label{EqO51}
\end{equation}
The thrust force $ [f_t]^B $ is given by \eqref{EqO52} as follows:
\begin{equation}
{{[{{f}_{t}}]}^{B}}={{[\begin{matrix}
m{{X}_{\delta th}}{{\delta }_{th}} & 0  \\
\end{matrix}]}^{\text{T}}}=m{{X}_{\delta th}}{{\delta }_{th}}\mathbf{i}
\label{EqO52}
\end{equation}
Where $ \delta_{th} \in [0,100] $ is the throttle input and $ X_{\delta th} $ can be found according to the number of engines $ n_\text{Engine} $ and maximum engine thrust $ T_{{\max},\text{Engine}} $ as given by \eqref{EqO53}.
\begin{equation}
{{X}_{\delta th}}=\frac{{{n}_{\text{Engine}}}{{T}_{\max ,\text{Engine}}}}{100m}
\label{EqO53}
\end{equation}
Finally, the gravitational force $ m[g]^B $ can also be written according to \eqref{EqO54}.
\begin{equation}
{{[g]}^{B}}={{[T]}^{BI}}{{[\begin{matrix}
0 & g  \\
\end{matrix}]}^{\text{T}}}=g{{[T]}^{BI}}\mathbf{k}
\label{EqO54}
\end{equation}
Where $ g $ is the local gravitational acceleration and $ [T]^{BI} $ is the inertial to the body transformation matrix.
\begin{equation}
{{[T]}^{BI}}=\left[ \begin{matrix}
\cos \theta  & -\sin \theta   \\
\sin \theta  & \cos \theta   \\
\end{matrix} \right]
\label{EqO55}
\end{equation}

To represent the dynamical equations in the affine state-space form, we shall define:
\begin{align}
\begin{split}
{{\mathbf{f}}_{\text{T}}}=g{{[T]}^{BI}}\mathbf{k}& -[\begin{matrix}
QW & -QU{{]}^{\text{T}}}
\end{matrix}\\
&-{{m}^{-1}}\bar{q}S{{[T]}^{BS}}({\boldsymbol{c}_{F0}}+{\boldsymbol{C}_{Fi}}\boldsymbol{\zeta} )
\end{split}
\label{EqO56}
\end{align}
\begin{equation}
{{\mathbf{G}}_{\text{T}}}=[\begin{matrix}
-{{m}^{-1}}\bar{q}S{{[T]}^{BS}}{\boldsymbol{c}_{F\delta E}} & {{X}_{\delta th}}\mathbf{i}  \\
\end{matrix}]
\label{EqO57}
\end{equation}
\begin{equation}
{{\mathbf{A}}_{\text{T}}}=\mathbf{I}+{{m}^{-1}}\bar{q}S{{[T]}^{BS}}{\boldsymbol{c}_{F\dot{\alpha }}}{{\mathbf{{c}'}}^{\text{T}}}
\label{EqO58}
\end{equation}
\begin{equation}
{{\text{f}}_{\text{R}}}={{I}_{y}}^{-1}\bar{q}Sc({{C}_{m0}}+\boldsymbol{c}_{mi}^{\text{T}}\boldsymbol{\zeta})
\label{EqO59}
\end{equation}
\begin{equation}
\mathbf{g}_{\text{R}}^{\text{T}}=[\begin{matrix}
{{I}_{y}}^{-1}\bar{q}Sc{{C}_{m\delta E}} & 0  \\
\end{matrix}]
\label{EqO60}
\end{equation}
\begin{equation}
\mathbf{a}_{\text{R}}^{\text{T}}=-{{I}_{y}}^{-1}\bar{q}Sc{{C}_{m\dot{\alpha }}}{{\mathbf{{c}'}}^{\text{T}}} \;.
\label{EqO61}
\end{equation}
So, defining the state and input vectors as $ \mathbf{x} = [U\;W\;Q\;\theta]^\text{T} $ and $ \mathbf{u} = [\delta_E\;\delta_{th}]^\text{T} $ correspondingly, the equations fall into the \eqref{EqO3} form, with $ \mathbf{f} $ and $ \mathbf{G} $ given by \eqref{EqO62} and \eqref{EqO63}.
\begin{equation}
\mathbf{f}(\mathbf{x})={{\left[ \begin{matrix}
{{\mathbf{A}}_{\text{T}}} & \mathbf{0} & \mathbf{0}  \\
\mathbf{a}_{\text{R}}^{\text{T}} & 1 & 0  \\
\mathbf{0} & 0 & 1  \\
\end{matrix} \right]}^{-1}}\left[ \begin{matrix}
{{\mathbf{f}}_{\text{T}}}  \\
{{\text{f}}_{\text{R}}}  \\
Q  \\
\end{matrix} \right]
\label{EqO62}
\end{equation}
\begin{equation}
\mathbf{G}(\mathbf{x})={{\left[ \begin{matrix}
{{\mathbf{A}}_{\text{T}}} & \mathbf{0} & \mathbf{0}  \\
\mathbf{a}_{\text{R}}^{\text{T}} & 1 & 0  \\
\mathbf{0} & 0 & 1  \\
\end{matrix} \right]}^{-1}}\left[ \begin{matrix}
{{\mathbf{G}}_{\text{T}}}  \\
\mathbf{g}_{\text{R}}^{\text{T}}  \\
0  \\
\end{matrix} \right]
\label{EqO63}
\end{equation}

The equations represented in the affine form would allow for the implementation of PB-CBF in the prediction-based QP framework.

\subsubsection{Control System and Control Barrier Function}

As a next step, a controller that would serve to stabilize the aircraft in cruising flight shall be first introduced. In this respect and for the sake of simplicity, a PID stability augmentation system (SAS) and auto-throttle described by \eqref{EqO64}, was selected to maintain the preset trim conditions in cruising flight.
\begin{align}
\begin{split}
&\mathbf{u}(t)={{\mathbf{u}}_{\text{Trim}}}\\
&+{{\mathcal{L}}^{-1}}\left[ \begin{matrix}
\frac{{{k}_{P\alpha }}}{{{U}_{0}}}(W-{{W}_{0}})+{{k}_{D\theta }}Q+\left( {{k}_{P\theta }}+\frac{{{k}_{I\theta }}}{s} \right)(\theta -{{\theta }_{0}})  \\
\left( {{k}_{Pu}}+\frac{{{k}_{Iu}}}{s}+{{k}_{Du}}s \right)(U-{{U}_{0}})  \\
\end{matrix} \right]
\end{split}
\label{EqO64}
\end{align}

Where $ \mathbf{x}_0 = [U_0\;W_0\;0\;\theta_0]^\text{T} $ is the trim state, $ \mathbf{u}_\text{Trim} $ is the trim input vector, and the $ k_i $'s are constant control design parameters. As mentioned earlier, if preventing an aerodynamic stall is the goal in mind, the angle of attack $ \alpha(\mathbf{x}) $ will be the safety-critical parameter, which is to be kept inside the range $ \alpha_{\min} < \alpha < \alpha_{\max} $. With this definition, $ \mathbf{c}^\text{T} $ from \eqref{EqO17} will be given by \eqref{EqO65}.
\begin{equation}
{{\mathbf{c}}^{\text{T}}}(\mathbf{x})=[\begin{matrix}
{{{\mathbf{{c}'}}}^{\text{T}}} & 0 & 0  \\
\end{matrix}]
\label{EqO65}
\end{equation}
Furthermore, since the airplane’s controls are expected to approximately behave in a linear fashion as long as near-trim conditions are maintained and the no-stall assumption holds, we can simplify the prediction process by using the linear approximation of \eqref{EqO30}; and so, an appropriate selection of $ \mathbf{x}_0 $ would be the trim state, for which we shall have:
\begin{equation}
\mathbf{A}=\left[ \begin{matrix}
{{X}_{u}} & {{X}_{w}} & 0 & -g\cos {{\theta }_{0}}  \\
{{X}_{u}} & {{Z}_{w}} & {{U}_{0}} & -g\sin {{\theta }_{0}}  \\
{{{\bar{M}}}_{u}} & {{{\bar{M}}}_{w}} & {{{\bar{M}}}_{q}} & 0  \\
0 & 0 & 1 & 0  \\
\end{matrix} \right]
\label{EqO66}
\end{equation}
\begin{equation}
\mathbf{B}={{\left[ \begin{array}{*{35}{l}}
{{X}_{\delta E}} & {{Z}_{\delta E}} & {{{\bar{M}}}_{\delta E}} & 0  \\
{{X}_{\delta th}} & {{Z}_{\delta th}} & {{{\bar{M}}}_{\delta th}} & 0  \\
\end{array} \right]}^{\text{T}}}
\label{EqO67}
\end{equation}
\begin{equation}
{{\mathbf{c}}^{\text{T}}}=[\begin{matrix}
0 & {{U}_{0}}^{-1} & 0 & 0  \\
\end{matrix}]
\label{EqO68}
\end{equation}
The parameters $ X_i $, $ Z_i $ and  $ \bar{M}_i $  are known as the stability derivatives, which can be found from the aerodynamic coefficients and other airplane parameters that have been previously introduced. The equations for obtaining the stability derivatives from the aerodynamic coefficients based on the trim condition can be found in \cite{O24}, \cite{O25} and \cite{O26}.

According to the stall prevention requirements $ \alpha_{\min} < \alpha(\mathbf{x}) < \alpha_{\max} $; so, based on the previous assertions, the CBFs
\begin{equation}
{{h}_{1}}(\mathbf{x})={{\alpha }_{\max }}-\alpha (\mathbf{x})
\label{EqO69}
\end{equation}
\begin{equation}
{{h}_{2}}(\mathbf{x})=\alpha (\mathbf{x})-{{\alpha }_{\min }}
\label{EqO70}
\end{equation}
with$ \mathbf{c}_1 = -\mathbf{c} $ and $ \mathbf{c}_2 = \mathbf{c} $ could be used to adjust the PID signal controller so that the angle of attack is kept inside the desired interval. Notably, it can be shown that the corresponding entry of $ \mathbf{c}^\text{T}\mathbf{A}\mathbf{B} $ for the throttle input is always zero if $ \mathbf{c}^\text{T} $, $ \mathbf{A} $ and $ \mathbf{B} $ take the forms given by \eqref{EqO66}, \eqref{EqO67}, and \eqref{EqO68}, meaning that it will be left at its initial value in the predictions. This would happen when an input has no effect in changing the safety-critical variable (according to which the base CBF is constructed). This is approximately true for the throttle input which does work to bring the angle of attack to zero, but it does so at a substantially slower and less effective rate than the elevator input command. That is why the linear approximation has neglected this effect. Fig.~\ref{FigO3} shows the complete control diagram. Before discussing the simulation results, though not the primary concern here, it may be useful to briefly address the justification for the added complexity of incorporating a CBF in the first place.
\begin{figure}[h]
\centerline{\includegraphics[width=\columnwidth]{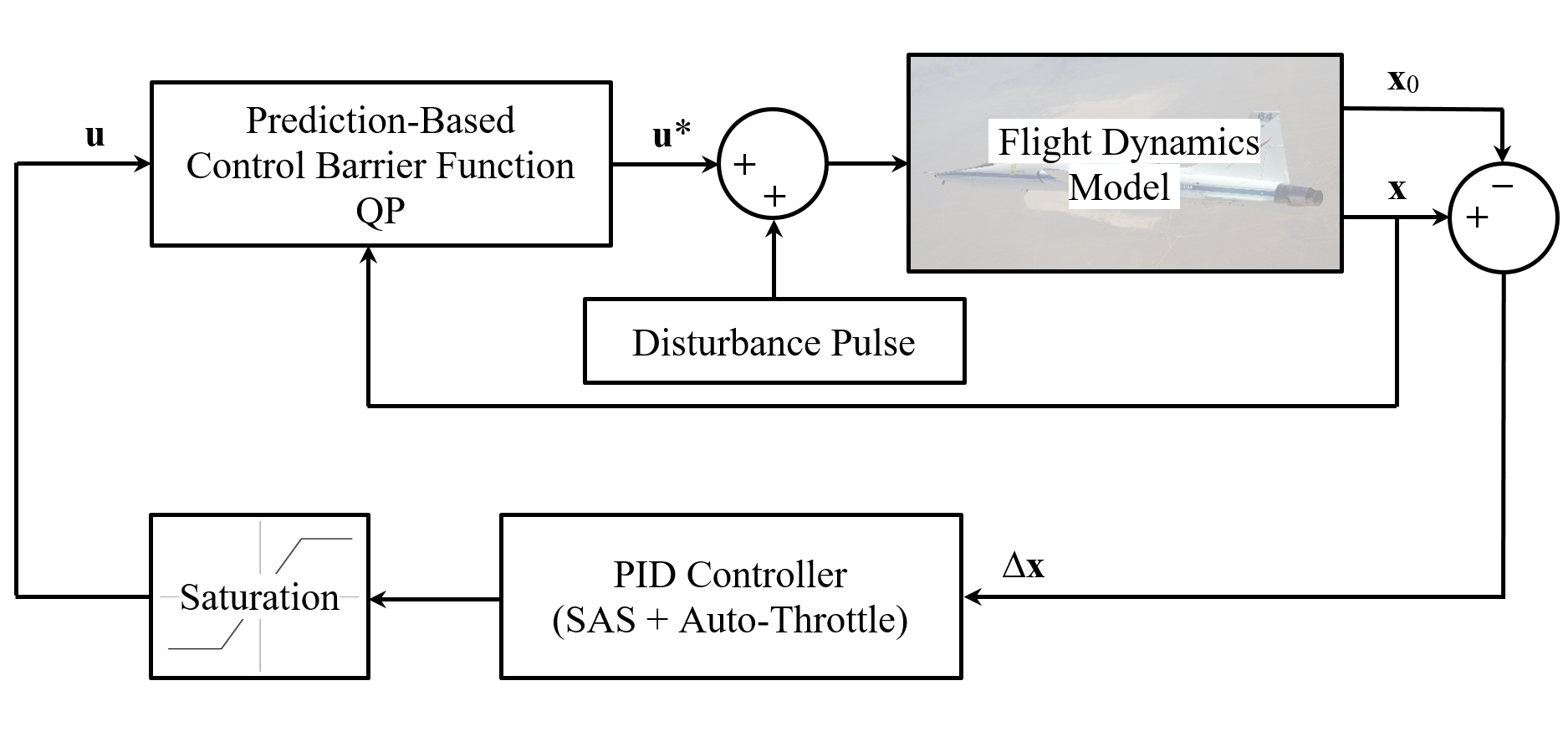}}
\caption{The control block diagram for the flight dynamics example.}
\label{FigO3}
\end{figure}

If the control system is sufficiently well-designed, it should not only be able to stabilize the system but also regulate secondary variables such as the angle of attack to prevent unsafe behavior. In regard to this specific query, it should be noted that many popular control schemes like PID controllers, pole placement-based state feedback controllers, or the linear optimal infinite-horizon LQR, are designed without any direct considerations of safety. That is to say, for such controllers, the alternative ways to show safety would be through extensive tests and simulations, or through the introduction of other nonlinear elements into the control system. Nonetheless, the former would merely provide one with an assurance instead of a formal guarantee, while the latter would most likely be not any less complex than a CBF if it is to provide the same level of guarantees. Furthermore, although CBFs are a source of added complexity, their clever selection would allow for a guaranteed avoidance of certain nonlinearities in the system that can be notoriously hard to control in most complex aeronautical problems, in turn allowing for less complexity in other elements of the control system. Therefore, having provided some justification for the utility of PB-CBFs in an aeronautical application, we shall now proceed to present and analyze the simulation results.

\subsubsection{Simulation Results}

The numerical simulations were done using the aerodynamic data given by \cite{O24}, for a high-performance training jet flying at subsonic velocities near sea level altitudes. The aircraft is cruising at a speed of $ \|v0\|_2 = 85.34\;\text{m.s}^{-1}$ and a nominal AoA of $ \alpha_0 \approx 10^\circ $. Furthermore, the elevator input is limited to $ \delta_E \in [-20^\circ, 10^\circ] $ and the angle of attack is to be kept inside the range $ \alpha \in [-10^\circ, 15^\circ] $.

The simulation starts with the aircraft in trim state, a doublet pulse disturbance of $ \Delta \mathbf{u}_D = (10^\circ)\Pi(t / 5 - 0.5) + (-20^\circ)\Pi(t / 5 - 1.5) $ is then introduced to the system as shown in Fig.~\ref{FigO3}. The class $ \text{K}_\infty$ function $ h (z) = \gamma z $ (a positive constant) was chosen with an equal value in all CBFs for the sake of comparison. Simulation results including the control inputs and the angle of attack graphs for three cases can be seen in Fig.~\ref{FigO4}.
\begin{figure}[h]
\centering
\centerline{\includegraphics[width=\columnwidth]{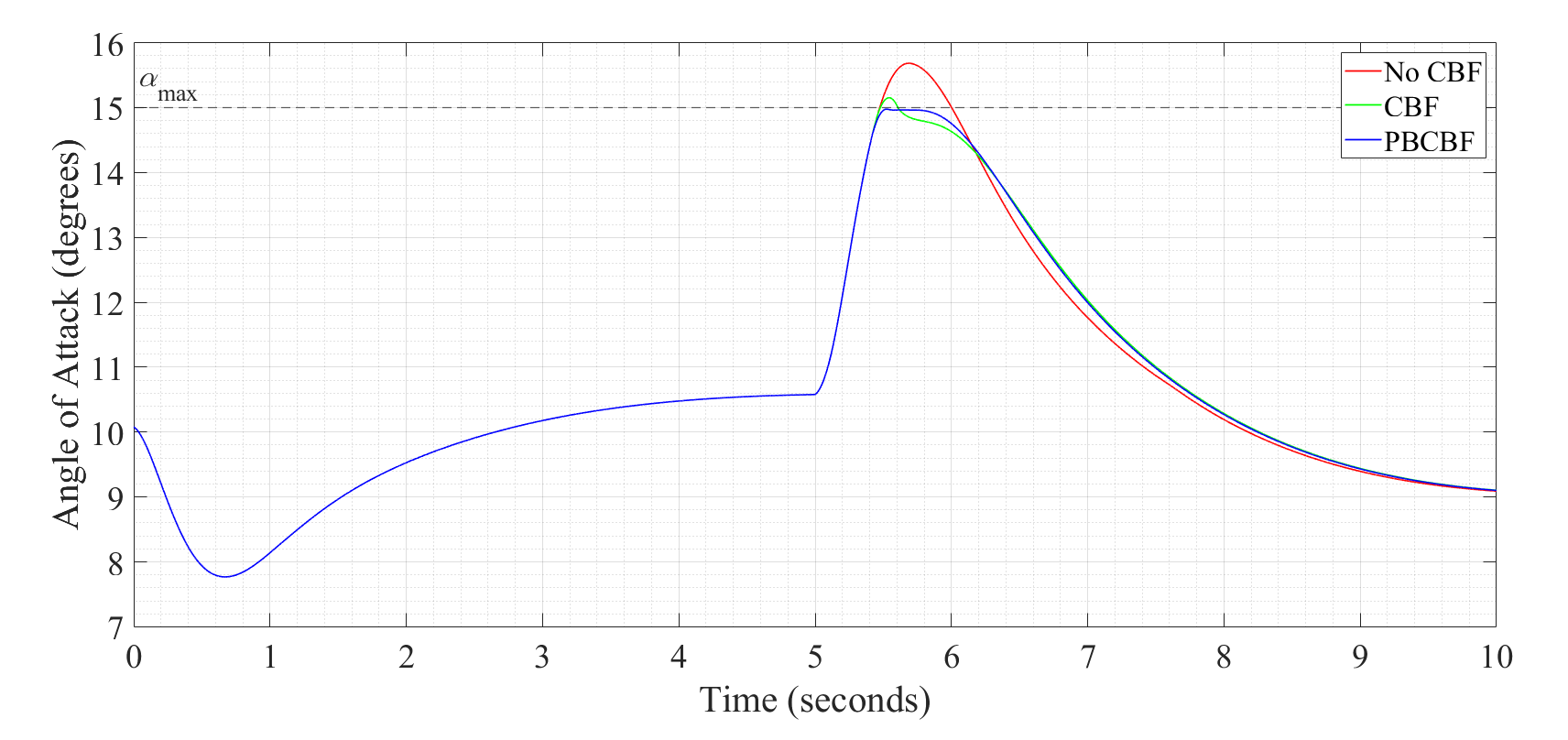}}
(a)
\centerline{\includegraphics[width=\columnwidth]{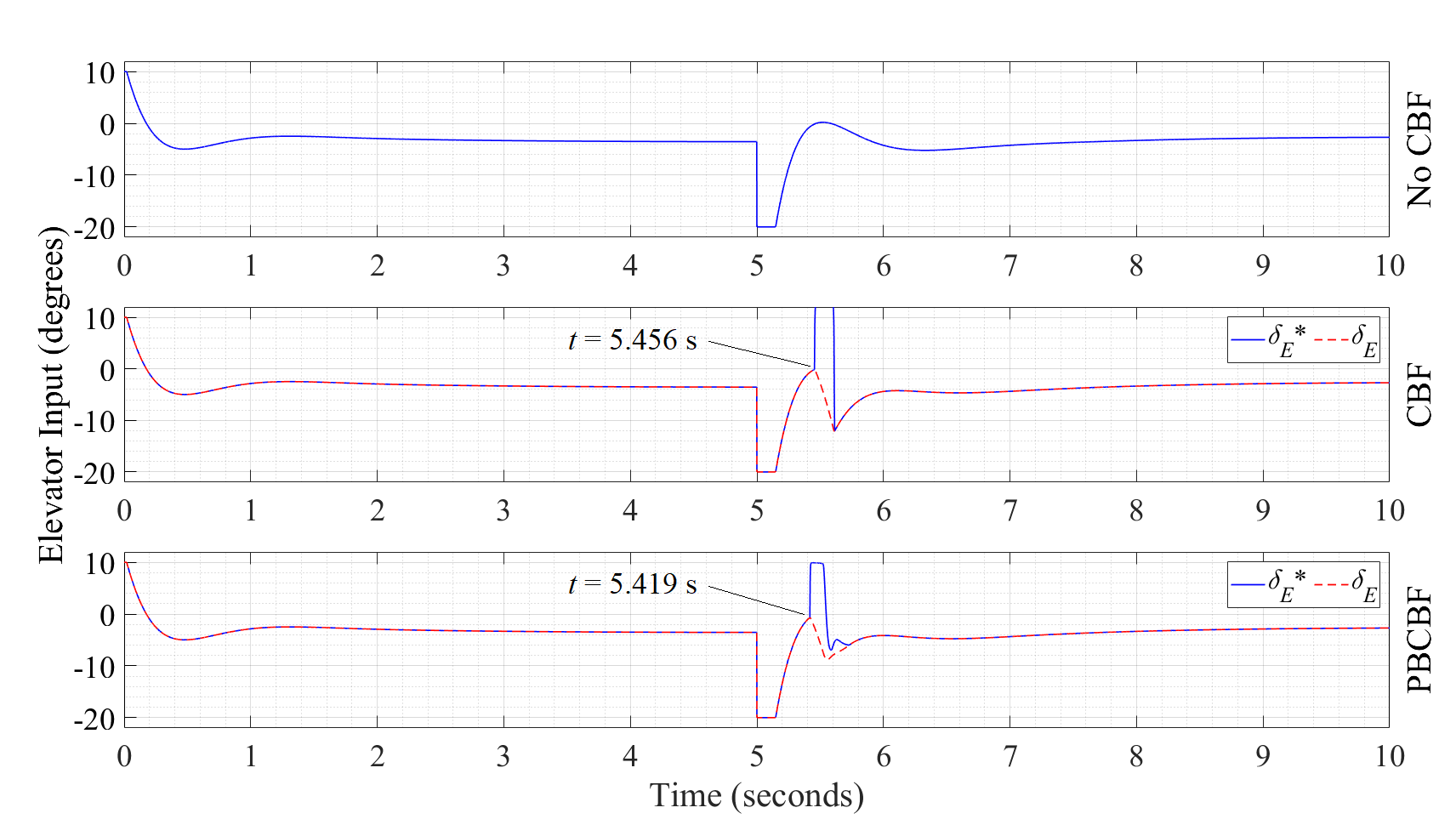}}
(b)
\centerline{\includegraphics[width=\columnwidth]{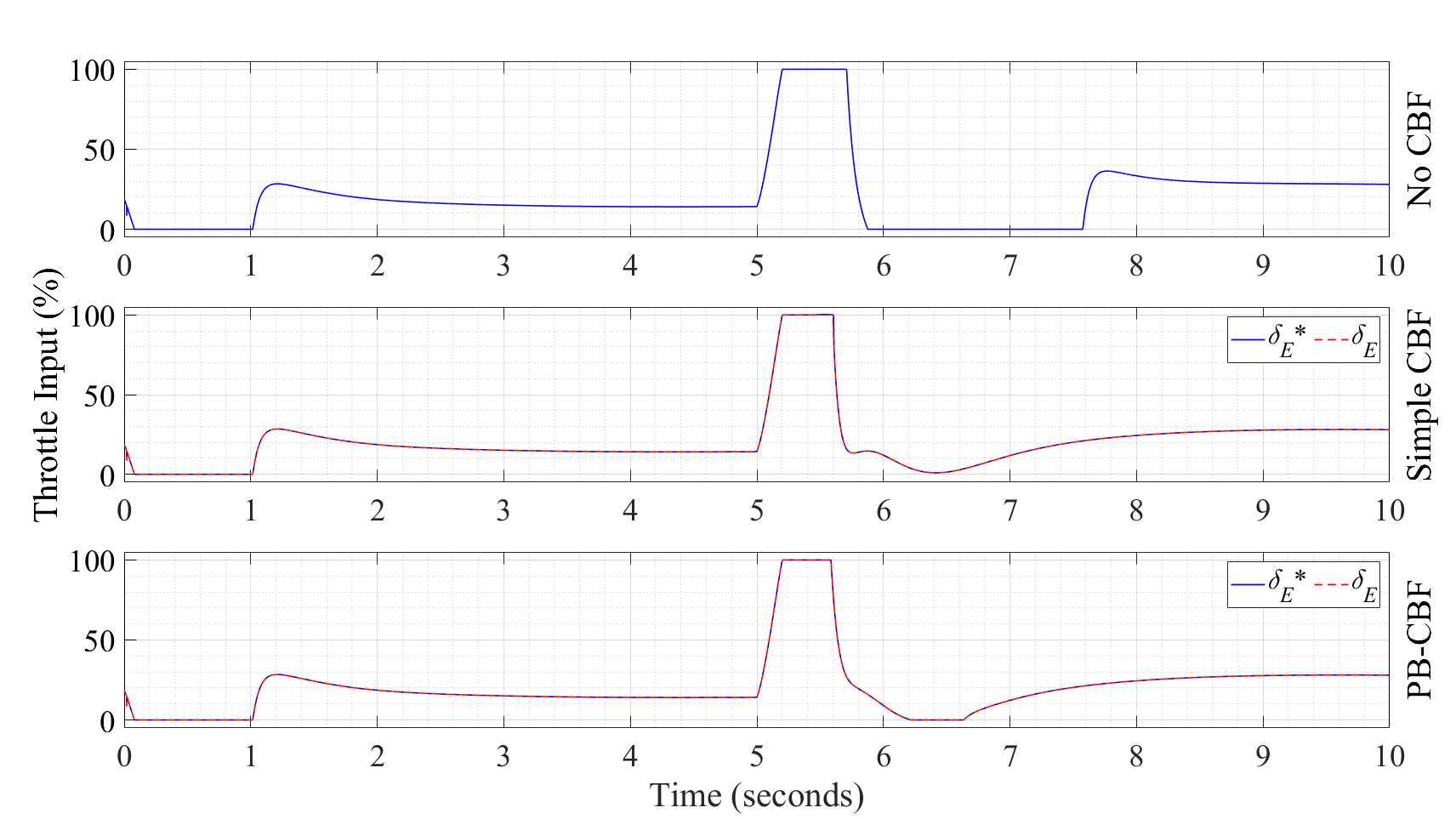}}
(c)
\caption{Simulation results for the stall prevention system: (a) the angle of attack, (b) the elevator input, and (c) the throttle input for three cases where: (\textit{i}) no CBFs are used, (\textit{ii}) the base CBFs are used (which do not use prediction), and (\textit{iii}) the PB-CBF scheme is implemented. The pre-modified control signals are displayed in dashed red.}
\label{FigO4}
\end{figure}

There are a few points to note from Fig.~\ref{FigO4}. Firstly, it is demonstrated that without the usage of CBFs, the disturbance would result in a firm crossing of the safe boundary for the AoA, which would result in a stall and subsequent degradation of the flight dynamic behavior if aerodynamic stalls have been modeled. Additionally, although the simple CBF scheme would make adjustments to the control signal as the safe boundary is approached, due to the control constraints not being considered explicitly, the modifications are not completely satisfying, leading to exceeding AoA set limits and control saturation (see Fig.~\ref{FigO4}-b). Secondly, it is noticed that the introduction of a prediction-based term into the CBF has resolved this issue by expediting the response by a few milliseconds resulting in feasible modifications. Needless to say, this activation window would likely be slightly longer for slower systems with more restrictive control constraints, but it nonetheless remains as a means of intervention that only interferes if safety is at risk. Thirdly and finally, as per the previous statements and expectations, it can be seen that neither of the CBFs has made any substantial modifications to the throttle input, which is due to its negligible effect on the safety parameter.

\section{Conclusion and Remarks} \label{SecV}

The present study has introduced a new prediction-based approach to control barrier functions (CBFs) aimed at input-constrained problems. The proposed framework and theory for prediction-based control barrier functions (PB-CBFs) works by propagating the system model from the present state with a control law aimed at stopping advances towards the boundaries of the safe set in a (near) optimal fashion. The desired functionality is therefore achieved by calculating the margin needed to stop given the constraints on the inputs, and then subtracting the calculated margin from the original CBF to allow it to be feasible with input constraints. While calculating the margin may be analytically possible in some cases, the present study introduced a simulation-based method that may be broadly utilized. This functionality was clearly demonstrated via the presented numerical examples where it was shown that the proposed PB-CBFs cause the activation of the safety filter to be sufficiently expedited when needed in order to prevent the crossing of safe boundaries. It was also shown that PB-CBFs can be less disruptive than the base CBFs when the available margin is large. Furthermore, as illustrated by the first example, another notable feature of PB-CBFs is their asymmetric trimming of the safe set, allowing high energy states that pose no threat to safety to be kept thereby being less restrictive on the main controller.

In propagating the model, the near-optimal control laws used in the present study were \textit{time-optimal}. That is to say, they minimize the time needed to stop the advance toward the safe boundary; this is instead of minimizing the total reduction in safety itself. Future work could investigate the possibility of doing the latter instead of the former to see its advantages and potential computational disadvantages. Another interesting directive will be to look into how PB-CBFs need to be augmented to work in the presence of unknown disturbances or uncertainties. Since PB-CBFs rely on a deterministic propagation of the model to find the stop margin, the introduction of a persistent and unknown disturbance may result in an incorrect estimation of the stop margin; hence, finding potential remedies to this particular issue will be of practical value.

\bibliographystyle{ieeetr}
\bibliography{references}

\end{document}